\documentclass{article}
\usepackage[utf8]{inputenc}
\usepackage{amsthm}
\usepackage{amssymb}
\usepackage{amsmath}
\usepackage{amsfonts}
\usepackage{fullpage}

\usepackage[backref = page]{hyperref}
\hypersetup{
    colorlinks = true,
    citecolor = blue,
    linkcolor = blue
}
\usepackage[capitalise]{cleveref}

\usepackage{mathtools}

\usepackage[textsize=tiny]{todonotes}

\usepackage{booktabs}
\usepackage{multirow}

\setlength{\tabcolsep}{6pt}

\newcommand{\BPCC}{\mathrm{BPP}^{\text{cc}}}
\newcommand{\NPCC}{\mathrm{NP}^{\text{cc}}}

\newcommand{\AND}{\text{AND}}
\newcommand{\OR}{\text{OR}}
\newcommand{\x}{\mathbf{x}}
\newcommand{\y}{\mathbf{y}}

\newcommand{\D}{\mathcal{D}}
\newcommand{\cF}{\mathcal{F}}

\newcommand{\cW}{\mathcal{W}}

\newcommand{\F}{\mathbb{F}}
\newcommand{\R}{\mathbb{R}}



\usepackage{thmtools} 
\usepackage{thm-restate}
\usepackage{cite}
\usepackage{enumerate}

\Crefname{claim}{Claim}{Claims}

\declaretheoremstyle[
    spaceabove=-6pt, 
    spacebelow=12pt, 
    headfont=\normalfont\bfseries, 
    bodyfont = \normalfont,
    postheadspace=1em
    ]{examplestyle} 

\declaretheorem[name=Theorem,numberwithin=section]{theorem}
\declaretheorem[name=Lemma,sibling=theorem]{lemma}
\declaretheorem[name=Claim,sibling=theorem]{claim}
\declaretheorem[name=Remark,sibling=theorem]{remark}
\declaretheorem[name=Corollary,sibling=theorem]{corollary}
\declaretheorem[name=Definition,sibling=theorem]{definition}
\declaretheorem[name=Example,sibling=theorem]{example}
\declaretheorem[name=Conjecture,sibling=theorem]{conjecture}
\newcommand{\andFunction}[1]{{{#1}_{\land}}}
\newcommand{\xorFunction}[1]{{{#1}_{\lxor}}}
\newcommand{\lxor}{\oplus}
\newcommand{\biglor}{\bigvee}
\newcommand{\bigland}{\bigwedge}

\newcommand{\UDISJ}[1]{\mathrm{UDISJ}_{#1}}
\newcommand{\EQ}[1]{\mathrm{EQ}_{#1}}
\newcommand{\IP}[1]{\mathrm{IP}_{#1}}
\newcommand{\Ind}[1]{\mathrm{Ind}_{#1}}

\newcommand{\bool}{\{0, 1\}}
\newcommand{\set}[1]{\left\{{#1}\right\}}


\newcommand{\PDT}[1][ ]{\mathrm{P}^{\text{dt}}\ifthenelse{\equal{#1}{ }}{ }{(#1)}}
\newcommand{\BPPDT}[1][ ]{\mathrm{BPP}^{\text{dt}}\ifthenelse{\equal{#1}{ }}{ }{(#1)}}
\newcommand{\PandDT}[1][ ]{\mathrm{P}^{\land\text{-dt}}\ifthenelse{\equal{#1}{ }}{ }{(#1)}}
\newcommand{\PxorDT}[1][ ]{\mathrm{P}^{\lxor\text{-dt}}\ifthenelse{\equal{#1}{ }}{ }{(#1)}}
\newcommand{\PzeroDT}[1][ ]{\mathrm{P}^{0\text{-dt}}\ifthenelse{\equal{#1}{ }}{ }{(#1)}}

\newcommand{\HSC}[1][ ]{\mathrm{HSC}\ifthenelse{\equal{#1}{ }}{ }{(#1)}}
\newcommand{\MBS}[1][ ]{\mathrm{MBS}\ifthenelse{\equal{#1}{ }}{ }{(#1)}}
\newcommand{\FHSC}[1][ ]{\mathrm{FHSC}\ifthenelse{\equal{#1}{ }}{ }{(#1)}}
\newcommand{\FMBS}[1][ ]{\mathrm{FMBS}\ifthenelse{\equal{#1}{ }}{ }{(#1)}}

\newcommand{\spar}[1][ ]{\mathrm{spar}\ifthenelse{\equal{#1}{ }}{ }{(#1)}}

\newcommand{\rank}[1][ ]{\mathrm{rank}\ifthenelse{\equal{#1}{ }}{ }{(#1)}}
\newcommand{\PCC}[1][ ]{\mathrm{P}^{\text{cc}}\ifthenelse{\equal{#1}{ }}{ }{(#1)}}
\newcommand{\BPPCC}[1][ ]{\mathrm{BPP}^{\text{cc}}\ifthenelse{\equal{#1}{ }}{ }{(#1)}}

\newcommand{\poly}{\mathrm{poly}}
\newcommand{\ceil}[1]{\left\lceil {#1} \right\rceil}
\newcommand{\mon}[1][ ]{\mathcal{M}(#1)}

\title{Log-rank and lifting for AND-functions}

\author{
Alexander Knop
\thanks{Mathematics, University of California, San Diego. Email: \textit{aaknop@gmail.com}}
\and
Shachar Lovett
\thanks{Computer Science and Engineering, University of California, San Diego. Research supported by NSF award 2006443. Email: \textit{slovett@ucsd.edu}}
\and 
Sam McGuire
\thanks{Computer Science and Engineering, University of California, San Diego. Research supported by NSF award 1909634 and a Simons Investigator award. Email: \textit{shmcguir@eng.uscd.edu}}
\and
Weiqiang Yuan
\thanks{Institute for Interdisciplinary Information Sciences, Tsinghua University. Email: \textit{yuanwq17@mails.tsinghua.edu.cn}}}

\begin{document}

\maketitle

\begin{abstract}
    Let $f: \{0, 1\}^n \to \{0, 1\}$ be a boolean function, and let 
$f_\land(x, y) = f(x \land y)$ denote the \emph{AND-function} of 
$f$, where $x \land y$ denotes bit-wise AND. We study the 
deterministic communication complexity of $f_{\land}$
and show that, up to a $\log n$ factor, it is bounded by a polynomial
in the logarithm of the real rank of the communication matrix of $f_{\land}$.
This comes within a $\log n$ factor of establishing the log-rank conjecture
for AND-functions with \emph{no assumptions} on $f$. Our result stands in
contrast with previous results on special cases of the log-rank conjecture, 
which needed significant restrictions on $f$ such as monotonicity or low 
$\mathbb{F}_2$-degree. Our techniques can also be used to 
prove (within a $\log n$ factor) a \emph{lifting theorem} for AND-functions, 
stating that the deterministic communication complexity of $f_{\land}$ is
polynomially related to the \emph{AND-decision tree complexity} of $f$. 

The results rely on a new structural result regarding boolean functions 
$f: \{0, 1\}^n \to \{0, 1\}$ with a sparse polynomial representation, which
may be of independent interest. We show that if the polynomial computing $f$
has few monomials then the set system of the monomials has a small hitting 
set, of size poly-logarithmic in its sparsity. We also establish extensions 
of this result to multi-linear polynomials $f: \{0, 1\}^n \to \R$ with a 
larger range. 
\end{abstract}

\section{Introduction}

Communication complexity has seen rapid development in the last 
couple of decades. However, most of the celebrated results in the field are 
about the communication complexity of important \emph{concrete functions},
such as set disjointness~\cite{razborov1992distributional} and gap Hamming 
distance~\cite{chakrabarti2012hamming}. Unfortunately, the understanding of 
communication complexity of \emph{arbitrary functions} is still lacking. 

Probably the most famous problem of this type is the log-rank 
conjecture~\cite{lovasz1988logrank}.
It speculates that given any total boolean communication problem, its
deterministic communication complexity is polynomially related to the 
logarithm of the real rank of its associated communication matrix. Currently,
there is an exponential gap between the lower and upper bounds relating to 
the log-rank conjecture. The best known upper bound~\cite{lovett2016rootrank}
states that the  communication complexity of a boolean function $F$ is at most
$O(\sqrt{\rank[F]} \log \rank[F])$, where  $\rank[F]$ denotes the real rank 
of the communication matrix of $F$. On the other hand, the best known lower 
bound~\cite{goos18detpart} states that there exist a boolean function $F$ 
with communication complexity $\Omega(\log^2(\rank[F]))$.

Given this exponential gap and lack of progress for general communication 
problems, many works~\cite{goos20019PNP,goos2020autoamisationCP,%
goos2017BPPlifting,zhang2009communication,mukhopadhyay2019lifting,%
pitassi2020hierarchies,chattopadhyay2019approxlogrank,goos18detpart,%
rezende2020liftingsimplegadgets,cChattopadhyay2019BPPlifting,%
mande2020paritylogrank,hatami2018xorfunctions,tsang2013fourier,%
montanaro2009communication,leung2011tight,zhang2014efficient,%
yao2015parity,hatami2017unbounded,sanyal2015fourieranddimension}
focused on the communication complexity
of functions with some restricted structure. In particular, the study of composed
functions was especially successful, and produced the so-called lifting method, 
which connects query complexity measures of boolean functions with communication
complexity measures of their corresponding communication problems.

Concretely, given a boolean function $f:\bool^n \to \bool$ and a \emph{gadget} 
$g : \bool^\ell \times \bool^m \to \bool$, the corresponding lifted function is 
the following communication problem: Alice gets as input $x \in (\bool^\ell)^n$, 
Bob gets as input $y \in (\bool^m)^n$, and their goal is to compute the composed
function $f \circ g^n$, defined as 
$(f \circ g^n)(x, y) = f(g(x_1, y_1), \dots, g(x_n, y_n))$. Lifting theorems allow
to connect query complexity measures of the underlying boolean function $f$ with 
communication complexity measures of the composed function. \Cref{figure:lifting} 
lists some notable examples.

\begin{figure}[ht]
    \centering
    \begin{tabular}{l l l l l}
        \toprule
        Gadget &  Query Model & Communication Model & Total Functions & Reference\\
        \midrule
        \multirow{2}{*}{$\Ind{m}$} & $\PDT$ & $\PCC$ & No & \cite{raz1999separationNC} \\
        & $\BPPDT$ & $\BPPCC$ & No &  \cite{goos2017BPPlifting} \\
        \multirow{2}{*}{$\IP{\log m}$} & $\PDT$ & $\PCC$ & No  & \cite{chattopadhyay2019lowdiscrep} \\
        & $\BPPDT$ & $\BPPCC$ & No &  \cite{cChattopadhyay2019BPPlifting} \\
        $\EQ{\log m}$ & $\PandDT$ & $\PCC$ & No  & \cite{mukhopadhyay2019lifting}
        \vspace{10pt} \\
        $\lxor$ & $\PxorDT$ & $\PCC$ & Yes & \cite{hatami2018xorfunctions} \\
        $g$ & $\deg$ & $\rank$ & Yes  &  \cite{sherstov2010quantumclassical} \\
      \bottomrule
    \end{tabular}
    \caption{%
        Query-to-communication lifting theorems. 
        The parameter $m$ is polynomial in $n$;
        $g$ in the last line is any function that has as sub-functions both an AND and an OR.
        $\PCC$ denotes determenistic communication complexity, $\PDT$ denotes decision tree 
        complexity, $\BPPDT$ denotes the probabilistic decision tree complexity with bounded 
        error, $\BPPCC$ denotes the probabilistic communication complexity with bounded error,
        $\PandDT$ denotes AND-decision tree complexity, $\deg$ denotes the real degree,
        and $\rank$ denotes the real rank.
    }
    \label{figure:lifting}
\end{figure}

Of particular interest to us are lifting theorems with very simple gadgets. The reason 
for that is twofold. First, using complex gadgets (such as inner product or indexing)
yields sub-optimal bounds in applications. A second and perhaps more important reason is 
that the study of composed functions with complex gadgets does not bring us any closer
towards the understanding of general communication problems. This is because the 
corresponding lifting theorems connect the communication complexity of the lifted function
to well-studied query measures of the underlying boolean function (such as decision tree 
complexity, or degree as a real polynomial), and hence does not shed new light on general
communication problems.

Thus, in this paper we consider gadgets which are as simple as they could be --- one-bit 
gadgets. In fact, there are only two non-equivalent one-bit gadgets: one-bit XOR, which
yields XOR-functions; and one-bit AND, which yields AND-functions. As we shortly discuss, 
they naturally correspond to query models which extend the standard ones: parity-decision 
trees and AND-decision trees.

\paragraph{XOR-functions.}
XOR-functions have been studied in several works~\cite{%
mande2020paritylogrank,%
hatami2018xorfunctions,tsang2013fourier,%
zhang2009communication,montanaro2009communication,%
leung2011tight,zhang2014efficient,yao2015parity,hatami2017unbounded,%
sanyal2015fourieranddimension}. 
Given a boolean function $f:\bool^n \to \bool$,
its corresponding XOR-function is $\xorFunction{f} = f \circ \lxor^n$, defined as
$\xorFunction{f}(x,y)=f(x \lxor y)$. A natural query measure corresponding to the 
communication complexity of XOR-functions is the \emph{Parity-Decision Tree} (PDT) model.
This model is an extension of the standard decision tree model, where nodes can query an
arbitrary parity of the bits. To see the connection, note that if $f$ can be computed by
a PDT of depth $d$ (denoted by $\PxorDT[f] = d$), then $\xorFunction{f}$ has a communication
protocol of complexity $2d$. This is by simulating the computation in the PDT: whenever
the PDT needs to compute the parity of $x \lxor y$ on some set $S$ of coordinates, each
player computes the corresponding parity on their input, and then they exchange the 
answers, which allows to compute the corresponding parity on $x \lxor y$ as well, and 
continue to traverse the tree. Thus we have $\PCC[\xorFunction{f}] \le 2 \PxorDT[f]$.

In the other direction, \cite{hatami2018xorfunctions} proved that $\PxorDT[f]$ is at most
a polynomial in the communication complexity of  $\xorFunction{f}$. That is, 
$\PxorDT[f] \le \text{poly}\left(\PCC[\xorFunction{f}] \right)$. Thus, the two measures
are equivalent, up to polynomial factors.

If one considers the log-rank conjecture for XOR-functions, then a simple 
observation~\cite{tsang2013fourier} is that the rank of the communication matrix of 
$\xorFunction{f}$ is equal to Fourier sparsity of $f$. Thus, in order to prove the 
log-rank conjecture for XOR-functions it is sufficient to show  that $\PxorDT[f]$ is at
most a polynomial in the log of the Fourier sparsity of $f$. Unfortunately, the latter 
relation is currently unknown.

\paragraph{AND-functions.}
The goal of this paper is to develop an analogous theory of AND-functions. Let 
$f : \bool^n \to \bool$  be a boolean function. Its corresponding AND-function
is $\andFunction{f} = f \circ \land^n$, defined as
$\andFunction{f}(x, y) = f(x \land y)$. Similar to the case of XOR-functions, there is
a corresponding natural query model, \emph{AND-Decision Tree} (ADT), where each node
in the decision tree can query an arbitrary AND of the input bits. We denote by 
$\PandDT[f]$ the minimal depth of an ADT computing $f$. Also here, efficient ADTs for
$f$ imply efficient communication protocols for $\andFunction{f}$, where 
$\PCC[\andFunction{f}] \le 2 \PandDT[f]$. Our main focus in this work is 
\begin{enumerate}[(i)]
    \item lifting theorems for AND-functions, and
    \item the log-rank conjecture for AND-functions.
\end{enumerate}
Concretely, we will show that assuming that $\andFunction{f}$ has either (i) 
efficient deterministic communication protocol or (ii) low rank, then $f$ has an
efficient ADT. As we will shortly see, understanding both questions is directly related 
to understanding the monomial structure of polynomials computing boolean functions.

\subsection{Main results}

Let $f : \bool^n \to \bool$ be a boolean function. It is computed by a unique multi-linear 
polynomial over the reals. That is, $f(x) = \sum_s f_s \prod_{i \in s} x_i$, where 
$s \subseteq [n]$ and $f_s \in \R$ are real-valued coefficients. The sparsity of $f$, 
denoted $\spar[f]$, is the number of nonzero coefficients in the decomposition.
This is related 
to AND-functions, as a simple observation (\Cref{claim:rank_sparsity})
is that this also equals the rank of its communication matrix, namely 
$\rank[\andFunction{f}] = \spar[f]$.

Before describing our results, we need one more definition. Let $\mathcal{F}$ be
a set system (family of sets). A set $H$ is a \emph{hitting set} for $\mathcal{F}$ if
it intersects all the sets in $\mathcal{F}$. Of particular interest to us are set systems
that correspond to the monomials of boolean functions. Given a boolean function $f$, 
define $\mon[f] = \set{s \ :\  f_s \neq 0, s \neq 0}$
to be the set system of the non-constant monomials of $f$. We exclude the constant term as
it is irrelevant for the purpose of constructing hitting sets, and it simplifies some of
the later arguments. Note that $|\mon[f]| \in \set{\spar[f], \spar[f] - 1}$.

Our main combinatorial result is that set systems corresponding to the monomials of
boolean functions have small hitting sets.

\begin{restatable}{theorem}{mainthm}
    \label{theorem:sparsity-to-hitting}
    Let $f : \bool^n \to \bool$ be a boolean function with sparsity $\spar(f) = r$. 
    Then there exists a hitting set $H$ for $\mon[f]$ of size $|H| = O((\log r)^5)$.
\end{restatable}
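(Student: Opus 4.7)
The plan is to reformulate the statement restriction-theoretically: a set $H \subseteq [n]$ is a hitting set for $\mon[f]$ if and only if the restriction $f|_{x_H = 0}$ is constant (and equal to the free coefficient $f_\emptyset$), since setting $x_i = 0$ for $i \in H$ zeros out every monomial containing some element of $H$. So the task becomes: find a set $H$ of size $O((\log r)^5)$ whose zero-restriction trivializes $f$.

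Given this reformulation, I would build $H$ iteratively, in stages. Starting from the current residual function of sparsity $s$, produce a small set $T \subseteq [n]$, add $T$ to $H$, and pass to $f|_{x_T = 0}$. The invariant to maintain is $\spar[f|_{x_T = 0}] \le s/2$, so that sparsity halves at each stage. After at most $\log_2 r$ stages, the residual function has sparsity $1$ (hence is constant), and $|H|$ is bounded by $\log_2 r$ times the size of each $T$. So it suffices to prove a \emph{progress lemma}: every non-constant boolean function with sparsity $s$ admits such a $T$ of size $O((\log s)^4)$.

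To prove the progress lemma I would proceed by case analysis on the structure of $\mon[f]$. If some variable $i \in [n]$ appears in at least $s/2$ monomials of $f$, then $T = \set{i}$ already suffices, since setting $x_i = 0$ removes precisely those monomials. Otherwise every variable appears in fewer than $s/2$ monomials: the ``spread'' regime. Here I would bucket $\mon[f]$ by monomial size $\ell$ and apply the Erdős–Rado sunflower lemma to each bucket: either the bucket has a small hitting set outright, or one extracts a large sunflower inside $\mon[f]$. A sunflower with many petals would then be converted into either a short hitting set (its core plus a small transversal of petals) or, by evaluating $f$ on inputs supported on core$\,\cup\,$(a chosen set of petals) and using that the polynomial coefficients are integers with $f$ bounded in $\set{0,1}$, into a contradiction with the sparsity bound. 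Aggregating the cost across the relevant size levels and the sunflower extraction gives the $(\log s)^4$ overhead per stage, and hence $(\log r)^5$ overall.

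The main obstacle is the progress lemma in the spread regime. The difficulty is that for a \emph{generic} set system of size $s$ the minimum hitting set can be as large as $s$, so the boolean structure of $f$ must do real work: integrality of coefficients, the $\set{0,1}$ range, and the coherence that boolean-ness imposes on the collection $\mon[f]$. The most delicate step will be showing that too rich a sunflower inside $\mon[f]$ forces additional non-zero monomials via an inclusion–exclusion / cancellation argument on carefully chosen evaluations of $f$; bounding how many such ``forced'' monomials appear — and thus how large a sunflower a sparsity-$s$ boolean $f$ can support — is where the proof concentrates its effort and where the exponent $5$ in $(\log r)^5$ is ultimately paid.
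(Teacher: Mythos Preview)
Your outer shell---find a small set $T$ whose zero-restriction kills a constant fraction of the monomials, iterate $\log r$ times---is essentially what the paper does in \Cref{lemma:HS_sparsity}. The real content of the theorem is your ``progress lemma,'' and here your proposal does not close.

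The sunflower/cancellation route you sketch has two problems. Quantitatively, Erd\H{o}s--Rado gives a $k$-sunflower only once a size-$\ell$ bucket exceeds $\ell!(k-1)^\ell$ sets; since nothing in your setup bounds $\ell$ by $O(\log s)$, the factorial blows past any $(\log s)^{O(1)}$ budget. Conceptually, even granting a large sunflower $C\cup P_1,\dots,C\cup P_k$ inside $\mon[f]$, your proposed contradiction---evaluate $f$ on $C$ together with various unions of petals, use integrality and the $\{0,1\}$ range---does not go through, because arbitrarily many \emph{other} monomials of $f$ can live inside these supports and absorb the would-be cancellations (the AND--OR example in the paper shows exactly this kind of massive cancellation in a genuinely boolean $f$). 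A sunflower becomes useful only after you restrict $x_C=1$ and insist the $P_i$ are \emph{minimal} monomials of the restricted function; then the disjoint petals are sensitive blocks, i.e.\ you are measuring $\MBS$ at a shifted point. That is precisely the quantity the paper controls, but via Nisan--Szegedy ($\MBS(f)=O((\log r)^2)$), not via inclusion--exclusion.

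The paper's actual engine is different from anything in your proposal: it introduces the LP pair $\MBS\le\FMBS=\FHSC\le\HSC$, bounds $\MBS$ by $(\log r)^2$ using Nisan--Szegedy's sensitivity--degree inequality, closes the integrality gap $\FMBS(f)=O(\MBS(f)^2)$ by a probabilistic ``double-cover'' argument (sample several blocks from a smooth witness distribution and peel off the collisions), and finally uses $\FHSC(f)=O((\log r)^4)$ to guarantee a single coordinate hitting a $1/(\log r)^4$ fraction of monomials---exactly the progress step you need. None of these three ingredients appears in your plan, and I do not see how to reach the $(\log s)^4$ per-stage bound without at least the first one.
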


This result can be seen as an analog of a similar result for union-closed families. 
A set system $\mathcal{F}$ is union-closed if it is closed under taking unions; namely, 
if $S_1, S_2 \in \mathcal{F}$ then also  $S_1 \cup S_2 \in \mathcal{F}$. A famous 
conjecture of Frankl~\cite{frankl1983tracefinitesets} is that in any union-closed 
family $\mathcal{F}$ there is an element which belongs to at least half the sets in
the set system. Assume $|\mathcal{F}| = r$; the best known result in this direction is 
that $\mathcal{F}$ has a hitting set of size $\log(r)$~\cite{knill1994graph}, which 
implies that one of its elements belongs to a $1 / \log(r)$ fraction of sets in the
set system. We view \cref{theorem:sparsity-to-hitting} as a qualitative extension 
of this result to more general set systems.

Our main application of \Cref{theorem:sparsity-to-hitting} is a near-resolution of the
log-rank conjecture for AND-functions. Our bounds nearly match the conjectured bounds 
(poly-log in the rank), except for an extra $\log(n)$ factor that we are currently 
unable to eliminate.

\begin{restatable}[Log-rank Theorem for AND-functions]{theorem}{logrank}
\label{thm:logrank}
    Let $f : \bool^n \to \bool$ be a boolean function. 
    Let $r = \spar[f] = \rank[\andFunction{f}]$. Then $f$ can be computed by an 
    AND-decision tree of depth 
    \[
        \PandDT[f] = O((\log r)^5 \cdot \log n).
    \]
    In particular, the deterministic communication complexity of $\andFunction{f}$ is
    bounded by
    \[
         \PCC[\andFunction{f}] = O((\log r)^5 \cdot \log n).
    \]
\end{restatable}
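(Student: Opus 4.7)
The plan is to build an AND-decision tree for $f$ by iteratively applying \cref{theorem:sparsity-to-hitting} to whatever ``residual'' function we have at a given node. Concretely, at each node we maintain a restriction $g$ of $f$ obtained by fixing some input bits from prior query answers, together with a set $V \subseteq [n]$ of still-unfixed variables. If $g$ is constant, output its value; otherwise, invoke \cref{theorem:sparsity-to-hitting} on $g$ to obtain a hitting set $H \subseteq V$ for $\mon[g]$ of size $h = O((\log \spar(g))^5) \le O((\log r)^5)$. Then query the single AND-predicate $\bigwedge_{i \in H} x_i$ and branch on the outcome.

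In the branch where $\bigwedge_{i \in H} x_i = 1$, we learn that $x_i = 1$ for every $i \in H$; we substitute and recurse on $g|_{x_H = 1}$, a restriction on $V \setminus H$ whose sparsity remains at most $\spar(g)$. In the branch where $\bigwedge_{i \in H} x_i = 0$, we perform a binary search over $H$ using $O(\log h) = O(\log \log r)$ additional AND-queries (each querying the AND over a sub-interval of $H$) to pinpoint a specific index $i^* \in H$ with $x_{i^*} = 0$; we substitute $x_{i^*} = 0$, which annihilates every monomial of $g$ containing $i^*$ and hence strictly reduces $\spar(g)$. The key invariant maintained by the recursion is that the residual function is always a restriction of $f$, hence boolean and of sparsity at most $r$, so \cref{theorem:sparsity-to-hitting} is applicable at every internal node with the same hitting-set bound.

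For the depth analysis, each iteration of the outer loop contributes at most $O(h) + O(\log h) = O((\log r)^5)$ queries. Thus, to obtain the overall bound of $O((\log r)^5 \cdot \log n)$, I need to show that along any root-to-leaf path, only $O(\log n)$ iterations of the outer loop occur. This is where I expect the main technical obstacle to arise: a naive accounting — where a ``$1$''-branch only removes the $h$ variables in $H$ additively, and a ``$0$''-branch only kills the (possibly single) monomials containing $i^*$ — yields at best a depth that is linear in $n$ or in $r$. The right move should be a joint potential function that combines, say, a $\log|V|$ term and a $\log(\spar(g))$ term (or more refined quantities capturing the monomial structure) and that decreases by a constant at every step. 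Designing this potential so that \emph{both} branches reliably contribute to its decrease is the heart of the proof, and I expect the extra $\log n$ factor (beyond the conjectured $\mathrm{poly}(\log r)$ bound) to stem precisely from the additive nature of the variable-count reduction in the ``$1$''-branch. Once the potential bound is set up correctly, the depth $O((\log r)^5 \log n)$ follows by combining the per-iteration cost with the bound on the number of iterations, and the communication bound $\PCC[\andFunction{f}] \le 2\,\PandDT[f]$ finishes the theorem.
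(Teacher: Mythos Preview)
Your approach has a genuine gap at exactly the point you flag: the $O(\log n)$ bound on the number of iterations is not attainable with the procedure you describe, and no potential of the shape you suggest will rescue it. In the $0$-branch, binary search locates an \emph{arbitrary} $i^*\in H$ with $x_{i^*}=0$; this $i^*$ may belong to only a single monomial of $g$, so setting it to $0$ decreases sparsity additively by $1$. In the $1$-branch, $|V|$ drops by $h$ but sparsity need not drop at all. With only additive progress on each side, a potential such as $a\log|V|+b\log\spar[g]$ decreases by roughly $ah/|V|$ or $b/\spar[g]$ per step, not by a constant, and you end up with at best $n/h + r$ iterations rather than $O(\log n)$. (Separately, your per-iteration cost is $1+O(\log h)$, not $O(h)+O(\log h)$: you make one AND-query on $H$, then at most $O(\log h)$ more in the $0$-branch.)

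The paper's argument is structured differently and avoids this pitfall. It does not invoke \cref{theorem:sparsity-to-hitting} as a black box; instead, \cref{lemma:MBS_sparsity} and \cref{lemma:FMBS_MBS} give $\FMBS[f]=O((\log r)^4)$, and then \cref{lemma:0DT_spars} (via \cref{claim:transfer}) builds an ordinary decision tree that at every node queries a \emph{single specific} variable lying in at least a $1/\FMBS[f]$ fraction of the current monomials. Every $0$-answer therefore shrinks sparsity by the multiplicative factor $(1-1/\FMBS[f])$, yielding $\PzeroDT[f]=O((\log r)^5)$. The $\log n$ factor enters only afterward, through the generic conversion $\PandDT[f]\le\PzeroDT[f]\cdot\ceil{\log(n+1)}$ of \cref{claim:0DT_ADT}, which binary-searches within each maximal run of $1$-answers; it is not an iteration count arising from a potential argument. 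The idea your proposal is missing is precisely to choose the queried variable for its density in $\mon[g]$, so that every $0$ guarantees multiplicative progress on sparsity.
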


Note that if $f : \bool^n \to \bool$ is a function of sparsity at least 
$n^{0.1}$, say, then \Cref{thm:logrank} proves the log-rank conjecture for its 
corresponding AND-function. Thus, the only remaining obstacle is to extend the result
to very sparse functions. 

Observe that \Cref{thm:logrank} implies a lifting theorem for AND-functions. Assume 
that $\andFunction{f}$ has deterministic communication complexity $C$. The rank of 
the associated communication matrix is then at most $2^C$, which by  \Cref{thm:logrank}
gives an ADT for $f$ of depth $O(C^5 \log n)$. We can improve the exponent $5$ to $3$
by directly exploiting the existence of a communication protocol.

\begin{restatable}[Lifting Theorem for AND-functions]{theorem}{lifting}
\label{thm:lifting}
    Let $f: \bool^n \to \bool$ be a boolean 
    function. Let $C = \PCC[\andFunction{f}]$ 
    denote the deterministic communication complexity of its corresponding AND-function.
    Then $f$ can be computed by an AND-decision tree of depth
    \[
        \PandDT[f] = O(C^3 \cdot \log n).
    \]
\end{restatable}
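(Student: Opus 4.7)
The plan is to follow the strategy of \Cref{thm:logrank} closely, but to replace the generic hitting-set bound given by \Cref{theorem:sparsity-to-hitting} with a sharper bound that directly exploits the communication protocol. Concretely, starting from a deterministic protocol $\Pi$ of complexity $C$ for $\andFunction{f}$, we have $\spar[f] = \rank[\andFunction{f}] \le 2^C$, and plugging this into \Cref{thm:logrank} gives an AND-decision tree of depth $O(C^5 \log n)$. To save two $\log$-factors, our plan is to prove a refined hitting-set lemma: if $\PCC[\andFunction{f}] \le C$, then $\mon[f]$ admits a hitting set of size $O(C^3)$, as opposed to $O((\log r)^5) = O(C^5)$ that follows from sparsity alone.

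To establish the refined hitting set, we plan to exploit the fact that the protocol partitions the communication matrix into at most $2^C$ monochromatic rectangles, which is considerably more structural information than the bound $\rank[\andFunction{f}] \le 2^C$ on its own. The rough idea is that the rectangle cover yields, for each monomial $s \in \mon[f]$, a small family of ``witnessing'' rectangles to which it contributes, and we can use this bipartite incidence structure between monomials and rectangles to identify heavy elements more efficiently than the statistical / recursive approach underlying \Cref{theorem:sparsity-to-hitting}. We expect that the search for heavy elements can be carried out in a way that charges only $O(C^3)$ elements before every monomial is covered.

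Given the $O(C^3)$ hitting set, the rest of the argument mirrors \Cref{thm:logrank}. We query the hitting set $H$, thereby restricting $z$ on $H$, and recurse on the residual function $g$ on $[n]\setminus H$. A key point we would verify is that the residual function satisfies $\PCC[\andFunction{g}] \le C$: indeed, Alice and Bob can simulate $\Pi$ on the extended inputs $(x' \cup \rho,\, y' \cup \rho)$ (with $\rho$ the fixed pattern on $H$), since $(x'\cup\rho)\wedge(y'\cup\rho) = (x'\wedge y')\cup \rho$, so $g$ inherits the protocol and hence the $O(C^3)$ hitting-set bound. Iterating this process as in \Cref{thm:logrank} for $O(\log n)$ rounds gives a total ADT depth of $O(C^3 \log n)$.

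The main obstacle is the refined hitting-set lemma itself. The difficulty is to convert the purely combinatorial structure of rectangle covers into a statement about monomials of $f$: while the analogous rank-based argument in \Cref{theorem:sparsity-to-hitting} appears to incur several $\log$-factor losses through iterated refinement, we need a more direct use of the rectangle decomposition that avoids these losses. Getting the right coupling between a rectangle and the monomials it controls—enough to win two $\log$-factors—is the technical heart of the argument.
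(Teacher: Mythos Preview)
Your plan correctly identifies the target---improve the hitting-set exponent from $5$ to $3$---but it is missing the key idea, and the route you propose is both vague and unnecessary. The paper does not extract extra structure from the rectangle cover. Instead, it observes that if $\MBS(f)=k$ then $\andFunction{f}$ contains unique disjointness $\UDISJ{k}$ as a sub-function (\Cref{claim:MBS_UDISJ}): the sensitive blocks $w_1,\dots,w_k$ at a point $z$ let Alice and Bob encode $a,b\in\bool^k$ as $x=z\lor\bigvee_{i\in a}w_i$ and $y=z\lor\bigvee_{j\in b}w_j$. Since $\PCC(\UDISJ{k})=\Omega(k)$, this immediately gives $\MBS(f)=O(C)$ (\Cref{corollary:PCC_MBS}). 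Then \Cref{lemma:FMBS_MBS} yields $\FMBS(f)=O(C^2)$, and \Cref{lemma:0DT_spars} with $\log\spar[f]\le C$ gives $\PzeroDT[f]=O(C^3)$; finally \Cref{claim:0DT_ADT} supplies the $\log n$ factor. No new ``refined hitting-set lemma'' via rectangles is needed---the entire improvement over \Cref{thm:logrank} comes from replacing the Nisan--Szegedy bound $\MBS(f)=O((\log r)^2)$ by the disjointness-based bound $\MBS(f)=O(C)$.

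Your recursion paragraph is also off. Querying the full hitting set $H$ and then ``iterating $O(\log n)$ rounds'' has no termination argument: on the branch where every bit of $H$ is set to $1$, monomials shrink but need not vanish, and the residual function can have the same $\PCC$ and essentially the same sparsity, so nothing forces the process to stop in $O(\log n)$ rounds. The paper's argument is different: it builds a \emph{zero} decision tree by always querying a variable that hits a $1/\FMBS(f)$ fraction of the remaining monomials, so each $0$-answer multiplicatively shrinks the sparsity; this bounds the \emph{number of $0$-edges} by $O(\FMBS(f)\cdot\log r)$, and the $\log n$ factor enters only afterward via the zero-DT to AND-DT conversion (\Cref{claim:0DT_ADT}). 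Note also that even if you had a hitting set of size $O(C^3)$, plugging that into the zero-DT argument would give $\PzeroDT[f]=O(C^3\cdot\log r)=O(C^4)$, not $O(C^3)$; the $C^3$ bound requires the tighter quantity $\FMBS(f)=O(C^2)$.
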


\subsection{Proof overview}

We first discuss how our combinatorial theorem (\Cref{theorem:sparsity-to-hitting}) 
implies the log-rank theorem (\Cref{thm:logrank}). It relies on showing that sparse 
boolean functions have efficient AND-decision trees (ADTs). 

Let $f$ be a boolean function with $\spar[f] = r$. Our goal is to construct an ADT for $f$
of depth $\poly(\log r) \cdot \log(n)$. This directly implies \Cref{thm:logrank}, as the
sparsity of $f$ equals the rank of its AND-function $\andFunction{f}$, and an ADT for $f$
of depth $d$ implies a protocol for $\andFunction{f}$ which sends $2d$ bits.

It will be convenient to first consider another model of decision trees, called 
\emph{zero decision trees}. A (standard) decision tree computing $f$ has zero 
decision tree complexity $d$, if any path from root to leaf in it queries at most $d$
variables which evaluate to $0$. We denote by $\PzeroDT[f]$ the minimal such $d$ over all
decision trees that compute $f$. It is shown in~\cite{mukhopadhyay2019lifting} 
(see also \Cref{claim:0DT_ADT}) that ADT complexity and zero DT complexity are tightly
connected. Concretely, for any boolean function $f$ they show that
\[
    \PzeroDT[f] \le \PandDT[f] \le \PzeroDT[f] \cdot \lceil \log(n+1) \rceil.
\]
Thus, we will show that $\PzeroDT[f] \le \poly(\log r)$, which implies our target bound of 
$\PandDT[f]$.

\Cref{theorem:sparsity-to-hitting} gives that there is a hitting set  size $h = \poly(\log r)$
which intersects all the monomials of $f$. In particular, there is a variable $x_i$ that 
intersects at least a $1 / h$ fraction of the monomials of $f$. The decision tree will first 
query $x_i$, and then branch depending on whether $x_i = 0$ or $x_i = 1$. We use the simple 
fact that the sparsity of $f$ cannot increase when variables are fixed, and continue this
process, until the value of the function is determined. Observe that every time that we 
query a variable and get $0$, we eliminates a $1 / h$ fraction of the monomials. If we get 
a $1$ the number of monomials can either stay the same or decrease, but it cannot increase.
So, as $f$ starts with $r$ monomials, we get that the maximal number of $0$s queried before 
all monomials are eliminated is at most $h \cdot \log(r)$. Hence 
$\PzeroDT[f] \le h \cdot \log(r) = \poly(\log r)$, as claimed.

Thus, from now on we focus on proving \Cref{theorem:sparsity-to-hitting}. Let $f$ be a 
boolean function of sparsity $r$, and let $\mon[f]$ denote the set system of its monomials.
We consider four complexity measures associated with it:
\begin{enumerate}
    \item The hitting set complexity ($\HSC$) is the minimal size of a hitting set for it. 
        This is what we are trying to bound, and can be phrased as an covering integer program.
    
    \item The fractional hitting set complexity ($\FHSC$) is the fractional relaxation for 
        $\HSC$. Here, we want a distribution over variables that hits every monomial with high 
        probability, which can be phrased as a fractional covering linear program.
    
    \item The fractional monotone block sensitivity ($\FMBS$) is the dual linear program. The 
        reason for the name would become clear soon. It can be phrased as a fractional 
        packing linear program.
    
    \item The monotone block sensitivity ($\MBS$) is the integral version of $\FMBS$. It equals 
        the maximal number of pairwise disjoint monomials in $f$. Equivalently, it is block
        sensitivity of $f$ at $0^n$. It can be phrased as a packing integer program.  
\end{enumerate}

More generally, given $s \subseteq [n]$, let $f_s$ denote the restriction of $f$ given by 
setting $x_i = 1$ for all $i \in s$. It will be convenient to identify $s$ with its indicator
vector $1_s \in \bool^n$. Thus, for $z \in \bool^n$, we denote by $f_z$ the restriction of 
$f$ to the $1$s in $z$. Define $\HSC[f, z]$, $\FHSC[f, z]$, $\FMBS[f, z]$, $\MBS[f, z]$ to be the
above four measures for the monomials of $f_z$. It is simple to observe 
(see \Cref{claim:FMBS_eq_FHSC}) that for each $z$ we have:
\[
    \MBS[f, z] \le \FMBS[f, z] = \FHSC[f, z] \le \HSC[f, z].
\]
As a first step, we use existing techniques in boolean function analysis techniques to bound 
$\MBS[f, z]$ in terms of the sparsity of $f$. We show in \Cref{lemma:MBS_sparsity} that 
\[
    \MBS[f, z] \le
        O((\log \spar[f_z])^2) \le 
        O((\log r)^2).
\]
Thus, to complete the picture, we would like to show that if $\MBS[f, z]$ is low then so is 
$\HSC[f, z]$. This however is false, if one compares them point wise (for a single $z$). 
However, we show that the measures are equivalent (up to polynomial factors) if instead 
we consider their maximal value over all $z$. Define
\[
    \MBS[f] = \max_{z \in \bool^n} \MBS[f, z]
\]
and similarly define $\FMBS[f],\FHSC[f],\HSC[f]$.
We show in \Cref{lemma:FMBS_MBS} that 
\[
    \FMBS[f] = O(\MBS[f]^2),
\]
linear programming duality gives $\FHSC(f) = \FMBS(f)$, and we show in \Cref{lemma:HS_sparsity}
that 
\[
    \HSC(f) = O(\FHSC(f) \cdot \log r).
\]
This completes the proof of \Cref{theorem:sparsity-to-hitting}.

We also briefly discuss \Cref{thm:lifting}. The improved exponent is obtained by using the bound 
$\MBS(f) = O(\PCC[\andFunction{f}])$, which we prove in \Cref{corollary:PCC_MBS}. Its proof is based
on the observation that if $\MBS(f)=b$ then $\andFunction{f}$ embeds as a sub-function unique 
disjointness on $b$ bits, and combine it with known lower bounds on the communication complexity of
unique disjointness.

\subsection{Generalizations}
Several of our definitions and techniques readily extend to non-boolean functions,
namely to functions $f : \bool^n \to \R$. We refer the reader to \Cref{sec:prelim}
for the relevant definitions and \Cref{sec:generalization} for a detailed discussion
of the generalized results. Here, we briefly state some of the results.

\begin{restatable}{theorem}{generalhsc}
\label{thm:gen_hsc}
    Let $f : \bool^n \to \R$ be a multlinear polynomial
    with sparsity $r$. Suppose $\MBS[f] = m$. Then the hitting set complexity of $f$ is
    bounded by
    \begin{equation*}
        \HSC[f] = O(m^2\log r). 
    \end{equation*}
\end{restatable}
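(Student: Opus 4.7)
The plan is to chain together the same sequence of inequalities used for \Cref{theorem:sparsity-to-hitting}, but with the hypothesized bound $\MBS[f] = m$ replacing the boolean-specific estimate $\MBS[f] = O((\log r)^2)$ (which was obtained from Fourier analysis and is not available for real-valued $f$). Concretely, I would first invoke the real-valued analog of \Cref{lemma:FMBS_MBS} to get $\FMBS[f] = O(\MBS[f]^2) = O(m^2)$, then use linear programming duality to conclude $\FHSC[f] = \FMBS[f] = O(m^2)$, and finally apply the real-valued analog of \Cref{lemma:HS_sparsity} to obtain $\HSC[f] = O(\FHSC[f] \cdot \log r) = O(m^2 \log r)$, as claimed.

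The substantive task is to verify that \Cref{lemma:FMBS_MBS,lemma:HS_sparsity} extend from boolean to real-valued $f$. This should be manageable because all four measures $\HSC, \FHSC, \FMBS, \MBS$ (and their restricted versions indexed by $z \in \bool^n$) are defined purely in terms of the monomial support set system $\mon[f]$, a combinatorial object independent of whether $f$ takes boolean or real values. Crucially, when we restrict by setting $x_i = 1$ for $i \in z$, the resulting multilinear polynomial $f_z$ satisfies $\mon[f_z] \subseteq \{s \setminus z : s \in \mon[f]\}$: cancellation among real coefficients may eliminate monomials but never create new ones. Hence sparsity is non-increasing under restrictions and the combinatorial structure behaves exactly as in the boolean case.

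The step $\HSC[f] = O(\FHSC[f] \cdot \log r)$ is a standard randomized-rounding argument: one scales the fractional hitting set by a $\Theta(\log r)$ factor and applies a union bound over the at most $r$ monomials, which is entirely indifferent to the range of $f$. The more delicate step is $\FMBS[f] = O(\MBS[f]^2)$, and this is where I expect the main obstacle to lie, since fractional packings for arbitrary set systems can exhibit unbounded integrality gaps; the proof of \Cref{lemma:FMBS_MBS} must therefore exploit specific features of monomial systems under restrictions (likely taking the maximum over $z$ to convert fractional mass into a well-structured integral packing). The obstacle is to confirm that this exploitation is purely combinatorial and does not secretly invoke the boolean range of $f$ — and since the argument operates only on monomial supports and their restrictions, which behave identically in the real multilinear setting, it should transfer verbatim, yielding the stated bound.
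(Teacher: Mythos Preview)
Your proposal is correct and follows exactly the paper's approach: apply \Cref{lemma:FMBS_MBS} to get $\FMBS[f]=\FHSC[f]=O(m^2)$, then \Cref{lemma:HS_sparsity} to get $\HSC[f]=O(m^2\log r)$. Your concern about extending these lemmas to real-valued $f$ is moot --- both are already stated and proved in the paper for arbitrary $f:\bool^n\to\R$, so the two-line proof in \Cref{sec:generalization} simply invokes them directly.
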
 


\begin{restatable}{theorem}{maingeneral}
\label{thm:maingeneral}
    Let $f: \bool^n \to S$ for $S \subset \R$. Assume that $\spar[f] = r$ and
    $|S| = s$. Then the hitting set complexity of $f$ is bounded by
    \begin{equation*}
        \HSC[f] = O(s^4 (\log r)^5).
    \end{equation*}
\end{restatable}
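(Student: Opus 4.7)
The plan is to follow the pipeline from the boolean case (\Cref{thm:logrank}): first bound the monotone block sensitivity of $f$, then invoke \Cref{thm:gen_hsc}. Since \Cref{thm:gen_hsc} gives $\HSC[f] = O(\MBS[f]^2 \log r)$ for any multilinear polynomial, the target bound $O(s^4 (\log r)^5)$ would follow from a generalization of \Cref{lemma:MBS_sparsity} of the form $\MBS[f] = O(s^2 (\log r)^2)$ for functions with range $|S| = s$.

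The natural route to such a bound is a reduction to boolean indicators of level sets. For each $a \in S$, define the boolean function $g_a = \mathbf{1}[f = a]$. By Lagrange interpolation, $g_a = p_a(f)$ for the univariate polynomial $p_a$ of degree $s - 1$ that satisfies $p_a(a) = 1$ and $p_a(b) = 0$ for $b \in S \setminus \{a\}$. Using the sub-multiplicative property of multilinear sparsity (which holds after reducing modulo $x_i^2 = x_i$), one obtains $\spar[g_a] \le (r+1)^{s-1}$. Applying \Cref{lemma:MBS_sparsity} to each boolean $g_a$ then yields $\MBS[g_a] = O((\log \spar[g_a])^2) = O(s^2 (\log r)^2)$.

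The bound on $\MBS[f]$ is transferred from the $g_a$'s via the decomposition $f = \sum_{a \in S} a \cdot g_a$: any monomial $S$ of $f$ satisfies $f_S = \sum_a a \cdot (g_a)_S \ne 0$, so $S \in \mon[g_a]$ for some $a$. Hence any pairwise disjoint family of monomials of $f$ distributes across the $s$ indicator set systems, and a pigeonhole argument (or its LP-dual analogue $\FMBS[f] \le \sum_a \FMBS[g_a]$ combined with the duality $\FHSC = \FMBS$) gives $\MBS[f] \le s \cdot \max_a \MBS[g_a]$. Combined with the previous estimates and \Cref{thm:gen_hsc}, this naive chain yields $\HSC[f] = O(s^6 (\log r)^5)$, overshooting the claimed bound by a factor of $s^2$.

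The main obstacle is saving these extra $s^2$ factors. Possible avenues include refining the sparsity estimate $\spar[g_a] \le (r+1)^{s-1}$ by exploiting that $g_a$ is a boolean level-set indicator of a sparse polynomial (rather than an arbitrary degree-$(s-1)$ polynomial in $f$), or tightening the aggregation step via a direct fractional argument that avoids the $s$-factor loss in the pigeonhole—for instance, by distributing each weight $p_S$ in the optimal $\FMBS[f]$ solution across the $g_a$'s so that coordinate constraints are amortized rather than summed. Identifying the right combination of refinements that produces the precise exponent $s^4$ is what I expect to be the main technical challenge.
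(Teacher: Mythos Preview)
Your overall pipeline---reduce to a boolean function via Lagrange interpolation, bound its $\MBS$ using \Cref{lemma:MBS_sparsity}, then apply \Cref{thm:gen_hsc}---is exactly what the paper does. The only real discrepancy is the aggregation step, and the ``main technical challenge'' you identify dissolves with a one-line observation.

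You introduce all $s$ indicators $g_a$ and try to distribute the $\MBS$ witnesses of $f$ among them by pigeonhole, incurring a factor of $s$ in the bound on $\MBS[f]$ (and hence $s^2$ in $\HSC[f]$). This is unnecessary: fix $z$ with $\MBS(f,z)=k$ and let $a=f(z)$. Every witness $w_i$ for $f$ at $z$ satisfies $f(z\lor w_i)\ne a$, so $g_a(z)=1$ while $g_a(z\lor w_i)=0$. Thus \emph{all} $k$ witnesses are simultaneously witnesses for the single boolean function $g_a$ at $z$, giving $\MBS[f]\le \MBS[g_a]$ directly---no pigeonhole, no factor of $s$. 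Applying \Cref{lemma:MBS_sparsity} to this one $g_a$ yields $\MBS[f]=O(s^2(\log r)^2)$, and \Cref{thm:gen_hsc} then gives the claimed $O(s^4(\log r)^5)$. The paper's proof is precisely this argument.

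As a secondary remark, your monomial-based transfer (``every monomial of $f$ lies in $\mon[g_a]$ for some $a$, so disjoint monomials pigeonhole'') is shakier than it looks: a disjoint family of monomials of $f_z$ that land in $\mon[(g_a)_z]$ need not be \emph{minimal} there, and non-minimal monomials are not guaranteed to be $\MBS$ witnesses for $g_a$. Working directly with the witnesses, as above, sidesteps this entirely.
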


\begin{restatable}{theorem}{generalsetsystem}
\label{thm:generalsetsystem}
    Let $\mathcal{F}=\{S_1,\cdots,S_r\}$ be a set system. Then for any 
    $m \ge 1$, at least one of the following holds:
    \begin{enumerate}
        \item $\mathcal{F}$ has a hitting set of size $h=O(m^2\log r)$. 
        \item There exists a subset $T\subset [n]$ so that 
            $\mathcal{F}_T = \{S_1\setminus T,\cdots,S_r\setminus T\}$ 
            contains $m$ pairwise disjoint sets.
    \end{enumerate}
\end{restatable}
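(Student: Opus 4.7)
The plan is to mirror the four-step reduction that proves \Cref{theorem:sparsity-to-hitting}, now applied to an arbitrary set system $\mathcal{F}$. For any $T \subseteq [n]$, I define $\HSC[\mathcal{F},T]$, $\FHSC[\mathcal{F},T]$, $\FMBS[\mathcal{F},T]$, $\MBS[\mathcal{F},T]$ as in the boolean case but applied to the restricted family $\mathcal{F}_T = \{S \setminus T : S \in \mathcal{F}\}$, and take their maxima over $T$ to define $\HSC[\mathcal{F}]$, $\FHSC[\mathcal{F}]$, $\FMBS[\mathcal{F}]$, $\MBS[\mathcal{F}]$. Note that since we are now working with an abstract family rather than monomials of a polynomial, restriction involves no cancellations: the combinatorics is actually slightly cleaner than in the boolean-function setting.

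Suppose option 2 fails, i.e., for every $T \subseteq [n]$ the family $\mathcal{F}_T$ contains fewer than $m$ pairwise disjoint sets, which is precisely the statement $\MBS[\mathcal{F}] < m$. First, I invoke the set-system analog of \Cref{lemma:FMBS_MBS} to obtain the quadratic fractional-to-integral gap
\[
    \FMBS[\mathcal{F}] = O(\MBS[\mathcal{F}]^2) = O(m^2).
\]
Next, LP duality applied restriction-by-restriction between the fractional hitting-set LP and its dual fractional packing LP gives $\FHSC[\mathcal{F}] = \FMBS[\mathcal{F}] = O(m^2)$. Finally, the standard LP rounding argument for covering (the set-system version of \Cref{lemma:HS_sparsity}) yields
\[
    \HSC[\mathcal{F}] \le O(\FHSC[\mathcal{F}] \cdot \log r) = O(m^2 \log r).
\]
Specializing to $T = \emptyset$ shows that $\mathcal{F}$ itself admits a hitting set of size $O(m^2 \log r)$, which is option 1.

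The main obstacle is to make sure that the generalization of \Cref{lemma:FMBS_MBS} truly goes through for arbitrary set systems. The rounding step (\Cref{lemma:HS_sparsity}) is classical randomized LP rounding with a union bound over the $r$ sets, so it uses nothing about the boolean structure and carries over verbatim. In contrast, \Cref{lemma:FMBS_MBS} is the content-rich ingredient; its proof in the boolean setting presumably starts from a fractional packing of weight $\phi$ on some $\mathcal{F}_T$, selects a set of minimal restricted size, and either recurses on the sets disjoint from it (when that minimal size is $\le \sqrt{\phi}$, using the packing constraint to show that only $O(\sqrt{\phi})$ weight is destroyed) or further restricts the ground set to shrink the large sets (when all remaining sets exceed $\sqrt{\phi}$), ultimately producing $\Omega(\sqrt{\phi})$ pairwise disjoint sets in a further-restricted family. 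Since this argument is purely set-theoretic, the verification amounts to confirming that the boolean version of the proof nowhere uses polynomial cancellations, boundedness of $f$, or integrality of coefficients beyond the monomial incidence structure; any such dependence would need to be isolated and replaced by a purely combinatorial counterpart.
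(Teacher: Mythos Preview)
Your outline is correct, but the paper takes a shorter route. Rather than redefining $\HSC$, $\FHSC$, $\FMBS$, $\MBS$ for abstract set systems and reproving the analogs of \Cref{lemma:FMBS_MBS} and \Cref{lemma:HS_sparsity}, the paper simply encodes $\mathcal{F}$ as the real-valued polynomial $f(x)=\sum_{i=1}^r \prod_{j\in S_i} x_j$ and applies \Cref{thm:gen_hsc} directly. Since \Cref{lemma:FMBS_MBS} and \Cref{lemma:HS_sparsity} were already stated and proved for arbitrary $f:\bool^n\to\R$, nothing needs to be reproved: if $\MBS(f)<m$ one gets a hitting set for $\mon[f]$ (and hence for $\mathcal{F}$) of size $O(m^2\log r)$; otherwise some $z$ has $\MBS(f,z)\ge m$, and the $m$ pairwise disjoint minimal monomials of $f_z$ are exactly sets of the form $S_i\setminus T$ with $T=\{j:z_j=1\}$. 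Your approach and the paper's are equivalent in content because, with all coefficients $+1$, no cancellations occur and the monomials of $f_z$ are precisely the nonempty members of $\mathcal{F}_T$; the paper just avoids the duplication of effort.

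One caution: your sketch of how \Cref{lemma:FMBS_MBS} is ``presumably'' proved (pick a minimal set, branch on whether its size is above or below $\sqrt{\phi}$, recurse) does not match the paper's actual argument. The paper's proof samples $k\approx \sqrt{1/p}$ independent blocks $w_1,\dots,w_k$ from the $p$-smooth witnessing distribution, lets $u$ be the elements appearing in at least two of them, shows $u$ is $(pk)^2$-smooth, and uses \Cref{claim:bias_restrict} to bound the probability that passing to $z'=z\vee u$ destroys sensitivity at each $w_t$. If you intend to carry out your plan by reproving the lemma in the set-system language, you should follow that sampling argument rather than the greedy one you describe; the greedy sketch as written does not obviously close.
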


\paragraph*{Acknowledgements.} S.L. thanks Kaave Hosseini, who was involved
in early stages of this work. S.M. thanks Russell Impagliazzo for useful 
discussions throughout the course of this work.

\section{Preliminaries}
\label{sec:prelim}

This section introduces a number of complexity measures used in the proofs
of our main results. We start by collecting some simple definitions, proceed
to define the complexity measures, and then provide some examples which 
clarify some aspects of these definitions. 

Throughout this section, fix a boolean function $f : \bool^n \to \R$. 
We identify elements of $\bool^n$ with subsets of $[n]$. Namely, we identify
$z \in \bool^n$ with the set $\set{i \ :\  z_i=1}$, and shorthand 
$[n] \setminus z = \set{i \ :\  z_i = 0}$.
Given two inputs $z, w \in \bool^n$ we denote by $z \lor w$ their union and by
$z \land w$ their intersection. The partial order on $\bool^n$ is defined by
the relation $z \leq w$, satisfied precisely when $z$ is a subset of $w$.
Define $f_z : \bool^{[n] \setminus z} \to \R$ to be the restriction of $f$ to 
inputs which are consistent with the $1$s in $z$; namely $f_z(w) = f(z \lor w)$.
Define 
$\cW(f, z) = \set{ w \in \bool^{[n] \setminus z} \ :\  f(z) \neq f(z \lor w) }$ 
and note that it can be equivalently defined as 
$\cW(f, z) = \set{ w \in \bool^{[n] \setminus z} \ :\  f_z(w) \neq f_z(0) }$.

Recall also the notation from the proof overview.
Any $f : \bool^n \to \R$
can be written uniquely
as a multilinear real-valued polynomial 
$f(x) = \sum_{s \subseteq [n]} \alpha_s \prod_{i \in s} x_i$. The \emph{sparsity}
of $f$, denoted $\spar(f)$, is the number of nonzero coefficients in the 
polynomial expansion of $f$. Next, let 
$\mon[f] = \set{s \subseteq [n] \ :\  \alpha_s \neq 0,\ s \neq 0^n}$ denote the
set system of non-zero, non-constant monomials in $f$ when written as a multilinear
polynomial. We emphasize that the coefficient $\alpha_{\emptyset}$ is not included 
in $\mon[f]$; $\alpha_{\emptyset}$ is inessential, since we are interested in hitting
sets for monomials and $\emptyset$ is trivially hit by any set. Observe that 
$|\mon[f]| \in \set{\spar[f], \spar[f] - 1}$.

For any set system $\cF$ over $[n]$, an element $z \in \cF$
is \textit{minimal} if there does not exist $w \in \cF$ with $w<z$.

\begin{claim}
    \label{claim:minimal_monomials}
    Fix $f: \bool^n \to \R$, $z \in \bool^n$
    and $\cW(f, z)$, $\mon[f_z]$ as above. Then, for
    any $w \in \bool^n$, $w$ is a minimal
    element in $\cW(f, z)$ if and only if $w$ is a minimal
    element in $\mon[f_z]$. 
\end{claim}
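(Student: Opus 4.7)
The plan is to leverage the multilinear polynomial expansion of $f_z$ so that membership in $\cW(f,z)$ becomes an algebraic statement about partial sums of monomial coefficients. Write $f_z(w) = \alpha_{\emptyset} + \sum_{s \in \mon[f_z]} \alpha_s \prod_{i \in s} w_i$, so that
\[
    f_z(w) - f_z(0) \;=\; \sum_{s \in \mon[f_z],\, s \le w} \alpha_s.
\]
Thus $w \in \cW(f,z)$ iff this restricted sum is nonzero. This is the sole identity I will use, plus the standard Möbius inversion recovering $\alpha_w$ from the values of $f_z$.

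For the forward direction (minimal in $\mon[f_z]$ implies minimal in $\cW(f,z)$), suppose $w$ is minimal in $\mon[f_z]$. Then by definition $\alpha_w \ne 0$, and no proper subset $s < w$ lies in $\mon[f_z]$. For any $w' < w$, the sum above restricted to $w'$ is empty, so $f_z(w') = f_z(0)$, hence $w' \notin \cW(f,z)$. Meanwhile the sum restricted to $w$ is just $\alpha_w \ne 0$, so $w \in \cW(f,z)$. This gives minimality of $w$ in $\cW(f,z)$.

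For the reverse direction, suppose $w$ is minimal in $\cW(f,z)$. By Möbius inversion, for $w \ne 0$,
\[
    \alpha_w \;=\; \sum_{w' \le w} (-1)^{|w|-|w'|} f_z(w') \;=\; \sum_{w' \le w} (-1)^{|w|-|w'|} \bigl(f_z(w') - f_z(0)\bigr),
\]
where the second equality uses $\sum_{w' \le w}(-1)^{|w|-|w'|} = 0$ for $w \ne 0$. By minimality of $w$ in $\cW(f,z)$, every $w' < w$ satisfies $f_z(w') = f_z(0)$, so only the top term survives and $\alpha_w = f_z(w) - f_z(0) \ne 0$. Hence $w \in \mon[f_z]$.

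It remains to show $w$ is minimal in $\mon[f_z]$. Suppose for contradiction that some $w' < w$ lies in $\mon[f_z]$. Then the poset $\{s \in \mon[f_z] : s \le w'\}$ is nonempty, so it has a minimal element $w''$ (with $w'' \le w' < w$). By the forward direction applied to $w''$, we obtain $w'' \in \cW(f,z)$, contradicting the minimality of $w$ in $\cW(f,z)$. Therefore $w$ is minimal in $\mon[f_z]$, completing the proof. I do not anticipate a main obstacle here; the only subtle point is making the Möbius inversion step precise enough that the backward direction yields $\alpha_w \ne 0$ before arguing about minimality.
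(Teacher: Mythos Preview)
Your proof is correct and follows essentially the same route as the paper's: both use the identity $f_z(w)-f_z(0)=\sum_{s\in\mon[f_z],\,s\le w}\alpha_s$, prove the forward direction identically, and for the reverse direction argue by taking a minimal monomial below $w$ and invoking the forward direction. The only cosmetic difference is that to show $\alpha_w\ne 0$ you invoke M\"obius inversion, whereas the paper simply notes that once no proper $w'<w$ lies in $\mon[f_z]$, the sum $f_z(w)=\sum_{u\le w}\alpha_u$ collapses to $\alpha_\emptyset+\alpha_w$, forcing $\alpha_w\ne 0$ directly.
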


\begin{proof}
    We assume for simplicity that $z = \emptyset$ so that $f_z(w) = f(w)$, 
    $f(\emptyset) = \alpha_\emptyset$
    and write $\cW = \cW(f, \emptyset)$.
    Suppose $w \in \mon[f]$ is a minimal element. Writing
    $f$ as a multilinear polynomial, we get
    $f(w) = \sum_{u \leq w} \alpha_u$.
    Since $\alpha_w$ is minimal, $f(w) = \alpha_\emptyset + \alpha_u$
    and so $f(w) \neq f(\emptyset)$ and $w \in \cW$. Additionally, $w$ is minimal
    in $\cW$ because if $w'<w$ then the non-constant terms of 
    $f(w') = \sum_{u \leq w'} \alpha_u$ are all $0$, hence $f(w')=f(0)$ and 
    $w' \not\in \cW$. 
    
    In the other direction, suppose $w \in \cW$ is a minimal 
    element. Assume there is $w'<w$ in $\mon[f]$; choosing such a minimal $w'$,
    we would get $f(w') \ne f(0)$ which violates the minimality of $w$. Similarly,
    if $w \not\in \mon[f]$ then we get $f(w) = \sum_{u \leq w} \alpha_u = f(0)$,
    which violates the assumption that $w \in \cW$. Thus $w$ is a minimal element
    in $\mon[f]$.
\end{proof}

\subsection{Monotone block sensitivity}

First, we consider \textit{monotone
block sensitivity}, a variant of the standard notion of \textit{block sensitivity}
due to Nisan and Szegedy~\cite{nisan1994degree}. In a nutshell, this is a 
``directed'' restriction of block sensitivity, where we can only change an
input by flipping $0$'s to $1$'s. We also define MBS (and all other complexity 
measures introduced later in this section) with respect to real-valued functions
over $\bool^n$. This differs from block sensitivity, which
is usually (though not always) studied in the context of boolean-valued functions.
The generalization to real-valued $f$ will be immaterial to some of our proofs, 
permitting us to draw more general conclusions regarding the monomial structure 
of multilinear polynomials; see \Cref{sec:generalization} for more details. 

Say that two inputs $z,w$ are disjoint if $z \land w = 0^n$;
namely, their corresponding sets are disjoint. 

\begin{definition}[Monotone block sensitivity]
    For $f: \bool^n \to \R$ and $z \in \bool^n$, 
    the \textit{monotone block sensitivity of $f$ at $z$}, 
    denoted $\MBS(f, z)$, is the largest  integer $k$ 
    such that there  exist $k$ pairwise disjoint inputs 
    $w_1, \dots, w_k \in \cW(f, z)$. We denote $\MBS[f] = \max_z \MBS[f, z]$. 
\end{definition}

For two motivating examples, observe that for the $n$-bit AND and OR functions
we have $\MBS[\AND] = 1$ and $\MBS[\OR] = n$, respectively. 

\begin{remark}
    We emphasize that $\cW(f, z) \subseteq \bool^{[n] \setminus z}$, so each $w_i$
    is disjoint from $z$. This corresponds to the standard definition of block 
    sensitivity where we restrict each block $w_i$ to be disjoint from the support
    of $z$.
\end{remark}

\begin{remark}
    Suppose $w_1, \dots, w_k$ are minimal witnesses that $\MBS[f, z] = k$
    in the sense that for any $i \in [k]$ there is no $w'_i < w_i$
    so that $w'_i \in \cW(f, z)$. Then by \Cref{claim:minimal_monomials},
    each $w_i$ is a minimal element in $\mon[f_z]$.
\end{remark}

As alluded to in the proof overview, $\MBS[f, z]$
can be phrased as the value of a particular
set packing linear program (LP). Fixing $z$,
write $\cW = \cW(f, z)$. The program optimizes
over variables $a_w$ for each $w \in \cW$.

\begin{align*}
    \text{maximize } & \sum_{w \in \cW} a_w \\
    \text{subject to } & \sum_{w \in \cW: w_i = 1} a_w \le 1 \quad \text{ for all } i \in [n]\\
        & a_w \in \bool \quad \text{for all } w \in \cW
\end{align*}

\emph{Fractional monotone block sensitivity} ($\FMBS$) is obtained
by relaxing the constraints in the above LP, allowing variables
$a_w$ to assume non-integral values in $[0, 1]$. We use an alternative
formulation of $\FMBS$ whose equivalence to the LP formulation is simple to verify. 

\begin{definition}[Smooth distribution]
    A distribution $\D$ over $\bool^n$ is said to be $p$-smooth if for any 
    $i \in [n]$ it holds that $\Pr_{w \sim \D}[w_i = 1] \le p$.
\end{definition}

\begin{definition}[Fractional monotone block sensitivity]
\label{def:FMBS}
    The fractional monotone block sensitivity of a function 
    $f : \bool^n \to \R$ at an input $z \in \bool^n$, denoted
    $\FMBS(f,z)$, 
    is equal to $1/p$, where $p > 0$ is the smallest number for which 
    there exists a $p$-smooth distribution $\D$ supported on a subset of
    $\cW(f, z)$. We denote $\FMBS[f] = \max_z \FMBS[f, z]$.
\end{definition}

\begin{remark}
    To see the equivalence between this definition of $\FMBS$ and 
    the LP formulation, notice that a solution 
    $\set{a_w \ :\  w \in \cW(f,z)}$ to the LP with $s = \sum a_w$ gives
    rise to a $1/s$-smooth distribution $\D$ over $\cW(f,z)$ via
    $\D(w) = a_w / s$. 
\end{remark}

\begin{remark}
    Clearly, any solution to the fractional program for $\FMBS$ 
    is a solution to the integral program for $\MBS$. Hence, both being 
    \emph{maximization} problems, $\FMBS$ upper bounds $\MBS$. Later, we 
    prove in \Cref{lemma:FMBS_MBS} that the converse of this inequality 
    holds in the sense that $\FMBS[f]$ is upper bounded by a polynomial 
    in $\MBS[f]$.
\end{remark}

\begin{remark}
    Fractional block sensitivity (the non-monotone variant) was considered
    by Tal in~\cite{tal2013properties}. Tal mentions explicitly the problem
    of finding separations between fractional block sensitivity and sensitivity. 
\end{remark}

\subsection{Hitting set complexity}

Next, we consider \emph{hitting set complexity}. This can be viewed
as a variant of \emph{certificate complexity},
a commonly-studied quantity in standard query complexity. 

\begin{definition}[Hitting set complexity]
    The hitting set complexity of a function $f : \bool^n \to \R$
    at an input $z \in \bool^n$, denoted $\HSC[f, z]$, is the minimal
    size of a set $H \subseteq [n]$ which intersects all sets in $\cW(f,z)$. 
    In other words, for every $w \in \cW(f, z)$ there is some 
    $i \in H$ so that $w_i = 1$. 
    We denote $\HSC[f] = \max_z \HSC[f, z]$. 
\end{definition}

Similarly to $\MBS$, it is simple to see that the $n$-bit AND and OR functions
have $\HSC[\AND_n] = 1$ and $\HSC[\OR_n] = n$, respectively. 

\begin{remark}
    It suffices to consider $H \subseteq [n]$ which have non-empty
    intersection with any \emph{minimal} element of $\cW(f, z)$.
    This is simply because if $H$ hits an element $w$
    then it also hits every superset of $w$. 
\end{remark}

\begin{remark}
    Suppose $H \subseteq [n]$ with $|H| = b$ witnesses $\HSC[f, 0^n] = b$.
    By the previous remark and \Cref{claim:minimal_monomials}, one can see
    that $H$ is hitting set of $\mon[f]$. 
\end{remark}

We can also phrase $\HSC[f, z]$
as the value of a certain set covering LP. Putting
$\cW = \cW(f, z)$, 
the LP optimizes over the variable $\set{b_i \ :\  i \in [n]}$ as follows:

\begin{align*}
    \text{minimize} &\ \sum_{i \in [n]} b_i \\
    \text{subject to} &\ \sum_{i \in [n]: w_i = 1} b_i \ge 1 \quad \text{ for all } w \in \cW\\
        &\ b_i \in \bool \quad \text{ for all } i \in [n]
\end{align*}

One can easily verify that this LP is dual to the 
LP defining monotone block sensitivity.
Fractional hitting set complexity is obtained from hitting
set complexity by relaxing each constraint $b_i \in \bool$
to $b_i \in [0, 1]$. We give an alternative definition, equivalent
to the LP formulation:

\begin{definition}[Fractional hitting set complexity]
\label{def:FHSC}
    The fractional hitting set complexity of a function 
    $f : \bool^n \to \R$ at an input $z \in \bool^n$, 
    denoted $\FHSC[f, z]$, is $1/p$, where $p > 0$ is the smallest
    number for which there
    exists a distribution $\D$ of indices $i \in [n]$ with the
    property that $\Pr_{i \sim \D}[w_i =1 ] \ge p$ 
    for each $w \in \cW(f, z)$.
    We denote $\FHSC[f] = \max_z \FHSC[f, z]$.
\end{definition}

\begin{remark}
    The same reasoning as the $\FMBS$ case can be used to show that this definition is equivalent to the LP definition. 
    Also by analogous reasoning, $\FHSC[f, z] \leq \HSC[f, z]$ (recalling that $\FHSC$ is a minimization problem). 
\end{remark}

The LPs defining $\FHSC$ and $\FMBS$ are dual, so 
linear programming duality yields $\FHSC[f, z] = \FMBS[f, z]$. Combined
with the remarked-upon relationships between $\MBS / \FMBS$ and $\HSC / \FHSC$, 
we conclude the following:
\begin{claim}
\label{claim:FMBS_eq_FHSC}
    For any function $f: \bool^n \to \R$ and input $z \in \bool^n$, 
    \[
        \MBS[f, z] \le \FMBS[f,z] = \FHSC[f, z] \le \HSC[f, z].
    \]
\end{claim}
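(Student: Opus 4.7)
The plan is to decompose the chain of inequalities into three pieces and prove each separately. Fix $f:\bool^n \to \R$ and $z \in \bool^n$, and write $\cW = \cW(f,z)$ throughout.

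First I would handle the integrality gaps $\MBS[f,z] \le \FMBS[f,z]$ and $\FHSC[f,z] \le \HSC[f,z]$. Given any feasible integer solution $\set{a_w}_{w \in \cW} \in \bool^{\cW}$ to the packing LP defining $\MBS[f,z]$ with value $k = \sum_w a_w$, the witnesses $w_1,\ldots,w_k \in \cW$ picked out by $a_w = 1$ are pairwise disjoint, so setting $\D(w_i) = 1/k$ yields a $(1/k)$-smooth distribution supported on $\cW$. By \Cref{def:FMBS}, this gives $\FMBS[f,z] \ge k$, and taking the maximum over integer solutions yields $\FMBS[f,z] \ge \MBS[f,z]$. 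Symmetrically, any integer hitting set $H$ of size $b$ for $\cW$ gives rise to the uniform distribution $\D$ on $H$, which satisfies $\Pr_{i\sim\D}[w_i=1] \ge 1/b$ for every $w \in \cW$, so $\FHSC[f,z] \le b$; minimizing over $H$ yields $\FHSC[f,z] \le \HSC[f,z]$.

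Second, I would prove the middle equality $\FMBS[f,z] = \FHSC[f,z]$ by LP duality. Normalizing, write the packing LP as maximizing $\sum_w a_w$ subject to $\sum_{w:w_i=1} a_w \le 1$ for each $i \in [n]$ and $a_w \ge 0$, and the covering LP as minimizing $\sum_i b_i$ subject to $\sum_{i:w_i=1} b_i \ge 1$ for each $w \in \cW$ and $b_i \ge 0$. These two LPs are duals of each other, and both are feasible (the covering LP is satisfied by $b_i \equiv 1$, and the packing LP by $a_w \equiv 0$), hence by strong LP duality their optima coincide. It then remains to verify that the distribution-based definitions of \Cref{def:FMBS} and \Cref{def:FHSC} compute exactly these LP optima. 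For $\FMBS$: a feasible packing solution $\set{a_w}$ of value $s = \sum_w a_w$ corresponds to the distribution $\D(w) = a_w / s$, which is $(1/s)$-smooth on $\cW$, and conversely a $p$-smooth distribution $\D$ on $\cW$ gives a feasible packing solution $a_w = \D(w)/p$ of value $1/p$, so the packing optimum equals $\FMBS[f,z]$. For $\FHSC$: a feasible covering solution $\set{b_i}$ of value $t = \sum_i b_i$ corresponds to the distribution $\D(i) = b_i/t$ satisfying $\Pr_\D[w_i=1] \ge 1/t$ for each $w \in \cW$, and conversely a distribution $\D$ with this $p$-property gives $b_i = \D(i)/p$ of value $1/p$, so the covering optimum equals $\FHSC[f,z]$.

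Chaining these three facts yields $\MBS[f,z] \le \FMBS[f,z] = \FHSC[f,z] \le \HSC[f,z]$, as desired. I do not expect any serious obstacle here; the statement is essentially the standard weak-duality/strong-duality sandwich for an integer packing-covering pair. The only mildly delicate step is verifying the equivalence between the ``smooth distribution'' formulations and the LP formulations, but this is a direct rescaling argument as sketched above.
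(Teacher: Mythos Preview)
Your proposal is correct and follows exactly the approach the paper takes: the paper derives the two outer inequalities from the remarks that an integral packing (resp.\ covering) solution is in particular a fractional one, and obtains the middle equality by observing that the packing and covering LPs are dual and invoking strong LP duality, together with the rescaling remark that identifies the distribution-based definitions of $\FMBS$ and $\FHSC$ with the LP optima. Your write-up simply spells these steps out in more detail than the paper does; there is no substantive difference.
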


\subsection{Some informative examples}

To digest the definitions, some examples are in order.
We start by noting that there are large gaps in the inequalities
from \Cref{claim:FMBS_eq_FHSC} for \textit{fixed} $z$. These correspond
to integrality gaps for the set cover and hitting set linear programs (of
which $\FHSC[f, z]$ and $\FMBS[f, z]$ are a special case), 
which are central to combinatorial optimization. 

The first example gives a separationbetween $\FMBS[f, z]$ and $\MBS[f, z]$.

\begin{example}[Projective plane]
    \label{example:projective_plane}
    For a prime power $m$, let $P$ be the set of 1-dimensional
    subspaces of $\F_m^3$ and $L$ the set of 2-dimensional
    subspaces of $\F_m^3$. $P$ is the set of \emph{points}
    and $L$ is the set of \emph{lines}. 
    Note that $|P| = |L| = m^2 + m + 1$.
    
    It is well-known that $P$ and $L$ form a \emph{projective plane},
    in that they enjoy the following relationship:
    \begin{enumerate}
        \item Any two points in $P$ are contained in exactly one 
            line in $L$. Moreover, each point is contained in $m + 1$ lines.
        \item Any two lines in $L$ intersect at exactly one point in $P$.
            Moreover, each line contains $m + 1$ points.
        \item There are 4 points, no 3 of which lie on the same line.
    \end{enumerate}
    For more background on finite geometry, 
    see, for example, \cite{ball2011introduction}.
    
    Let $n = m^2 + m + 1$, thinking of each $i \in [n]$ as corresponding
    to a point $p_i \in P$. For lines $\ell \in L$, let
    $S_\ell = \set{ i \in [n] \ :\  p_i \text{ is contained in } \ell}$ be the
    set of (indices of) points incident to $\ell$ and define $f: \bool^n \to \bool$ as 
    \[
            f(z) = \biglor_{\ell \in L} \Big(\bigland_{i \in S_\ell} z_i\Big).
    \]
    
    Since any two lines intersect at a point, any $\ell_1, \ell_2 \in L$
    have $S_{\ell_1} \cap S_{\ell_2} \neq \emptyset$. This implies 
    $\MBS(f, 0^n) = 1$. On the other hand, because each line contains $m+1$ points, 
    $\Pr_{i \in [n]}[i \in S_\ell] = (m+1)/(m^2 + m + 1)$ when $i$ is uniform and
    therefore $\FMBS(f, 0^n) \approx m \approx \sqrt{n}$. 
\end{example}

The next example gives a similar separation
between $\FHSC(f, z)$ and $\HSC(f, z)$.

\begin{example}[Majority]
    \label{example:maj}
    For $n$ even, let $f(z) = \mathbf{1}[\sum_i z_i \ge n/2]$ be the Majority function. 
    The minimal elements of $\mon[f]$ consist of sets $s$ with $n/2$ members. 
    
    Any set $s$ of size at most $n/2$ will fail to hit $[n] \setminus s \in \mon[f]$.
    Therefore any hitting set for the monomials of $f$, namely for $\mon[f]$, has size more than
    $n/2$. In particular, $\HSC(f, 0^n) = n / 2 + 1$ (clearly $n/2 + 1$ suffices). 
    On the other hand, the uniform distribution over $[n]$ satisfies
    $\Pr[i \in s] = 1/2$ for any minimal monomial $s \in \mon[f]$. Hence
    $\FHSC[f, 0^n] = 2$. 
\end{example}

\noindent These two examples show that it will be necessary to utilize
the fact that $\MBS$ and $\HSC$ are defined as the maximum over
all inputs.

The next examples shows that $\HSC[f]$ and $\MBS[f]$ can be constant while
$\spar(f)$ grows exponentially. 

\begin{example}[AND-OR]
    \label{example:and-or}
    Consider a string $z \in \bool^{2n}$
    written as $z = xy$ for $x,y \in \bool^n$.
    Define 
    \[
        f(x,y) = \bigland_{j \in [n]} \left(x_j \lor y_j\right).
    \]
    One can verify that $\HSC[f] = \MBS[f] = 2$.
    On the other hand, writing $f$ as a multilinear polynomial
    yields
    \[
        f(x,y) = \prod_{j \in [n]} (x_i + y_j - x_j \cdot y_j),
    \]
    which clearly has sparsity exponential in $n$. 
\end{example}

\noindent Note that this holds for the \textit{global} (i.e. maximizing
over $\bool^n$) definitions of $\MBS$ and $\HSC$. To see the 
significance of this example,  recall from the proof overview that we 
are interested in eventually showing $\MBS(f) \leq O((\log \spar(f))^2)$.
This example shows that this latter inequality can be very far from the
truth; we are able to make up for this discrepancy by using the 
low-sparsity assumption multiple times. 

Finally, we include an example which will become relevant to our 
applications to communication complexity in \Cref{sec:communication}.

\begin{example}[Redundant indexing]
    \label{example:redundant-indexing}
    Let $k \ge 1$, and  
    consider two sets of variables $\set{x_S}_{S \subseteq [k]}$ and 
    $\set{y_i}_{i \in [k]}$ of sizes $2^k$ and $k$, respectively. 
    Let $n = 2^k + k$ and define
    \[
        f(x, y) = \sum_{i \in [k]} 
            \left(\prod_{S \ :\  i \in S} x_S\right) 
                (1 - y_i) \left(\prod_{j \neq i} y_j\right).
    \]
    In words, $f(x, y) = 1$ when $y$ has weight exactly $k-1$ with $y_i = 0$
    and $x_S = 1$ for every $S$ containing $i$.

    
    By the mutilinear representation, one can see that
    the sparsity of $f$ is $2k \sim \log n$. 
    Moreover, $\HSC(f) \le 2$. To see why, 
    consider an input $z = (a, b)$ and note that $f$ restricted
    to inputs $w = (x, y) \geq z$ becomes
    \[
        f'(x, y) = \sum_{i : b_i = 0} 
            \left(\prod_{S: i \in S, a_S = 0} x_S\right) 
                (1 - y_i)\left(\prod_{j: j \neq i, b_j = 0} y_j\right).
    \]
    In particular, if $a \ne 1^{[2^k]}$ then the variable $x_{[2^k] \setminus a}$ 
    hits all the monomials, and if $a=1^{[2^k]}$ then any two $y_i,y_j$ hit all 
    the monomials.
    
\end{example}

We view this as an important example in understanding the $\log n$
factor currently present in the statements of \Cref{thm:logrank}
and \Cref{thm:lifting}. This connection will be discussed in 
more detail in \Cref{sec:discussion}.

\section{Proof of \texorpdfstring{\Cref{theorem:sparsity-to-hitting}}{Theorem \ref{theorem:sparsity-to-hitting}}}


We recall the statement of \Cref{theorem:sparsity-to-hitting}.

\mainthm*

The proof relies on three lemmas which provide various relationships
between $\spar[f]$, $\MBS[f]$ and $\HSC[f]$, as well as their
fractional variants. In this subsection, we will state the lemmas and
show how \Cref{theorem:sparsity-to-hitting} follows as a consequence. 
Then, in the following subsections, we prove the lemmas. 

The first gives an upper bound on the monotone block sensitivity
of a \textit{boolean-valued} $f$ in terms of its sparsity. 

\begin{restatable}{lemma}{lemmaone}
    \label{lemma:MBS_sparsity}
    For any boolean function $f : \bool^n \to \bool$,
    $\MBS[f] = O(\log(\spar[f])^2)$.
\end{restatable}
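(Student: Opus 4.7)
The plan is to reduce to bounding $\MBS[g, 0^n]$ for arbitrary boolean $g$, then collapse a block-sensitivity witness into a boolean sub-function on $k$ variables whose sparsity is at most that of $g$, and finally invoke a structural lower bound on the sparsity of such sub-functions.

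First, I would reduce to the case $z = 0^n$. Note that $\MBS[f, z]$ equals the block sensitivity at $0$ of the restriction $f_z$ obtained by fixing to $1$ the coordinates in $z$. Because such a restriction can only merge or annihilate monomials, $\spar[f_z] \le \spar[f]$. Therefore it suffices to prove that for every boolean $g: \bool^n \to \bool$, we have $\MBS[g, 0^n] = O((\log \spar[g])^2)$.

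Second, I would extract a boolean sub-function encoding the block-sensitivity witness. Let $k = \MBS[g, 0^n]$ and, WLOG, $g(0^n) = 0$ (otherwise apply the argument to $1 - g$, whose sparsity differs by at most one). Choose pairwise disjoint inputs $w_1, \ldots, w_k \in \cW(g, 0^n)$ that are minimal, so that $g(1_{w_i}) = 1$; by \Cref{claim:minimal_monomials} each $w_i$ is a minimal non-constant monomial of $g$. Define $h : \bool^k \to \bool$ by
\[
    h(y_1, \ldots, y_k) = g\!\left( \sum_{i=1}^{k} y_i \cdot 1_{w_i} \right).
\]
Writing out the polynomial expansion, one checks $\spar[h] \le \spar[g]$: each monomial of $g$ either vanishes (if it touches a coordinate outside $\cup_i w_i$) or becomes a monomial of $h$, possibly combining with others. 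Moreover, $h(0) = 0$ and $h(e_i) = 1$ for every $i \in [k]$, so $h$ has sensitivity (and hence block sensitivity) $k$ at $0^k$.

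The crux is then a structural claim: any boolean $h : \bool^k \to \bool$ with $h(0) = 0$ and $h(e_i) = 1$ for every $i \in [k]$ satisfies $\spar[h] \ge 2^{\Omega(\sqrt{k})}$. Combining this with $\spar[h] \le r$ yields $k = O((\log r)^2)$, as required. The Nisan--Szegedy theorem applied to $h$ immediately gives $\deg(h) \ge \Omega(\sqrt{k})$, but translating a degree lower bound into a sparsity lower bound is in general impossible for boolean functions (e.g., $\AND_n$ has degree $n$ yet sparsity $1$). The argument must exploit the structure of $h$: the conditions at $0$ and at the singletons pin down $\alpha^h_{\emptyset} = 0$ and $\alpha^h_{\{i\}} = 1$, and booleanness at each weight-two input forces $\alpha^h_{\{i,j\}} = h(e_i + e_j) - 2 \in \{-2, -1\}$, hence all $\binom{k}{2}$ quadratic coefficients are nonzero, giving the preliminary bound $\spar[h] \ge \Omega(k^2)$. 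The main obstacle is iterating this style of argument---using booleanness on larger subcubes of $\bool^k$ together with the Nisan--Szegedy degree bound---to upgrade $\Omega(k^2)$ to the required $2^{\Omega(\sqrt{k})}$; this is the technical heart of the lemma.
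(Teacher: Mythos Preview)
Your reduction to the sub-function $h:\bool^k\to\bool$ with $h(0)=0$ and $h(e_i)=1$ for all $i$ is exactly how the paper begins, and the target claim $\spar[h]\ge 2^{\Omega(\sqrt{k})}$ is equivalent to what the paper proves. The gap is that you do not actually prove this claim: you obtain $\spar[h]\ge\Omega(k^2)$ from the quadratic coefficients and then say that ``iterating this style of argument'' is the technical heart, without carrying it out. That iteration does not obviously go through---at weight three and beyond, the coefficients are no longer forced to be nonzero by booleanness alone, so the coefficient-counting approach stalls.

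The paper's missing idea is a restriction argument rather than a coefficient argument. Starting from $h$ with sparsity $\le r$ and full sensitivity $k$ at $0$, set $d=O(\log r)$ and repeat $k/2$ times: pick the coordinate appearing in the most monomials of degree $\ge d$ and set it to $0$. Each such monomial has at least $d$ of the $k$ coordinates, so some coordinate hits a $d/k$ fraction of them; hence the count of degree-$\ge d$ monomials shrinks by a factor $(1-d/k)$ each step, and after $k/2$ steps it is below $(1-d/k)^{k/2}r<1$, i.e.\ zero. The restricted function now has degree $<d=O(\log r)$ while retaining sensitivity $\ge k/2$ at $0$ (only $k/2$ coordinates were killed). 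Nisan--Szegedy then gives $k/2=O(d^2)=O((\log r)^2)$. This sidesteps entirely the need to convert the degree bound into a sparsity bound.
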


\noindent We stress that this only holds for boolean-valued functions. 
To some extent, we will be able to relax this condition when we
consider generalizations in \Cref{sec:generalization}. Additionally,
we note that this inequality can be very from tight: \Cref{example:and-or}
 gives a function with constant MBS but exponential sparsity. 

The second lemma shows that $\FMBS$ and $\MBS$ are equivalent
up to a polynomial factor.
Unlike \Cref{lemma:MBS_sparsity}, this holds for any real-valued 
function. 

\begin{restatable}{lemma}{lemmatwo}
    \label{lemma:FMBS_MBS}
    For any function $f:\bool^n \to \R$, $\FMBS[f] = O(\MBS[f]^2)$. 
\end{restatable}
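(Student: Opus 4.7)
The plan is to prove $\FMBS[f] \le O(\MBS[f]^2)$ by a randomized matching argument on the fractional witness. Write $F = \FMBS[f]$ and $M = \MBS[f]$, pick $z^*$ maximizing $\FMBS[f, z^*] = F$, and let $\D$ be a $(1/F)$-smooth distribution on $\cW(f, z^*)$. Replacing each $w$ in the support of $\D$ by a minimal sub-element $w' \subseteq w$ in $\cW(f, z^*)$ only decreases every marginal $\rho_i = \Pr_\D[i \in w]$ and hence preserves the smoothness; by \Cref{claim:minimal_monomials} we may therefore assume $\D$ is supported on the minimal monomials of $f_{z^*}$.

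The core step is a second-moment matching bound: for two independent samples $w, w' \sim \D$,
\[
    \Pr[w \cap w' \neq \emptyset] \;\le\; \mathbb{E}[|w \cap w'|] \;=\; \sum_i \rho_i^2 \;\le\; \frac{1}{F}\sum_i \rho_i \;=\; \frac{\mathbb{E}_\D[|w|]}{F},
\]
using $\rho_i \le 1/F$. Drawing $k$ i.i.d.\ samples and deleting one endpoint from each intersecting pair (a standard alteration step) yields at least $k - \binom{k}{2}\mathbb{E}_\D[|w|]/F$ pairwise disjoint monomials in $\cW(f, z^*)$. Optimizing over $k$ gives $\MBS[f, z^*] = \Omega\!\left(F / \mathbb{E}_\D[|w|]\right)$, so as soon as $\mathbb{E}_\D[|w|] = O(M)$ we conclude $F = O(M^2)$.

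The main obstacle is that minimal monomials of $f_{z^*}$ can be much larger than $M$ in general (the $\AND_n$ function, with sparsity $1$ and a single minimal monomial of size $n$, shows this); so $\mathbb{E}_\D[|w|]$ is not bounded a priori. To handle this, I would exploit the key structural fact that for any minimal monomial $w$ of $f_{z^*}$ and any nonempty proper subset $A \subsetneq w$, the set $A$ lies in $\cW(f, z^* \cup (w \setminus A))$; this is immediate from minimality, since $f_{z^*}(w \setminus A) = f_{z^*}(\emptyset)$ while $f_{z^*}(w) \ne f_{z^*}(\emptyset)$. Using this, the plan is to iteratively migrate $\D$ to a smooth distribution $\D'$ on $\cW(f, z')$ for some $z' \supseteq z^*$ whose support consists of short monomials (of size $O(M)$), while losing only a constant factor in the smoothness parameter. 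The hardest part, and the place where care is needed, is bounding the degradation of smoothness under this migration so that we remain in the regime $F' = \Omega(F)$; once this is done, the second-moment argument applied to $\D'$ yields $\Omega(F/M)$ pairwise disjoint monomials at $z'$, forcing $M \ge \Omega(F/M)$ and hence $F = O(M^2)$.
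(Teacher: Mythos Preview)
Your second-moment bound and the alteration step are correct, and they do give $\MBS[f,z^*] = \Omega\!\big(F/\mathbb{E}_\D[|w|]\big)$. The gap is exactly where you say it is: the ``migration'' step is not carried out, and the structural fact you quote does not obviously support it. Splitting a single minimal monomial $w$ into $A$ and $w\setminus A$ forces the basepoint to move to $z^* \cup (w\setminus A)$, which depends on $w$; if you pick a common $z' = z^* \cup T$ for the whole distribution, then for other $w'$ in the support there is no control over whether $w'\setminus T$ still lies in $\cW(f,z')$ (equivalently, whether $f(z') \ne f(z' \lor w')$). You have given no mechanism to bound how much of $\D$ survives this move, so the claimed constant-factor loss in smoothness is unjustified.

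The paper's proof avoids the length issue altogether and supplies precisely the missing control. It also samples $w_1,\dots,w_k \sim \D$ independently, but instead of deleting colliding samples it lets $u = \bigvee_{i\ne j}(w_i \wedge w_j)$ and passes to $z' = z \lor u$; then the $w_t \setminus u$ are pairwise disjoint \emph{by construction}, with no size hypothesis on the monomials. The price of moving the basepoint is bounded by a one-line noise-stability claim: for any $q$-smooth $\D'$ one has $\Pr_{w\sim \D'}[f(z)\ne f(z\lor w)] \le q\cdot \FMBS(f,z)$. Since $u$ (and its variant $u_t$ excluding $w_t$) is $(pk)^2$-smooth, applying this at $z$ and at each $z\lor w_t$—and here using $\FMBS(f,z\lor w_t)\le \FMBS(f)=1/p$, i.e.\ the \emph{maximum over all inputs}—gives $\Pr[E_t]\le pk^2$. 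Taking $k\asymp 1/\sqrt{p}$ makes all bad events small and yields $\Omega(k)$ disjoint blocks at $z'$, hence $\MBS(f)=\Omega(\sqrt{F})$. If you want to repair your route, the tool you are missing is this noise-stability bound (which is what actually quantifies the ``degradation'' when shifting basepoints), not the monomial-splitting observation.
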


The third lemma, which also holds for any real-valued function,
upper bounds the hitting set complexity of $f$ in terms of $\FMBS(f)$
and $\spar[f]$.

\begin{restatable}{lemma}{lemmathree}
    \label{lemma:HS_sparsity}
    For any function $f : \bool^n \to \R$, 
    $\HSC[f] \le \FMBS[f] \cdot \log(\spar[f])$.
\end{restatable}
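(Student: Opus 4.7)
The plan is to prove the inequality by randomized rounding of the linear program defining $\FHSC$. Fix any $z \in \bool^n$; it suffices to establish $\HSC[f, z] \le \FMBS[f, z] \cdot \log \spar[f]$, since taking the maximum over $z$ on both sides then yields the stated global bound.

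By the LP duality observation collected in \Cref{claim:FMBS_eq_FHSC}, $\FHSC[f, z] = \FMBS[f, z]$. Writing $p = 1/\FMBS[f, z]$, \Cref{def:FHSC} supplies a distribution $\D$ over $[n]$ such that $\Pr_{i \sim \D}[w_i = 1] \ge p$ for every $w \in \cW(f, z)$. Now perform randomized rounding: sample $N$ indices $i_1, \dots, i_N$ i.i.d.\ from $\D$ and let $H = \set{i_1, \dots, i_N}$.

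The key observation that keeps the union bound small is that it suffices for $H$ to meet every \emph{minimal} element of $\cW(f, z)$, since any superset of a hit element is automatically hit (this is the remark following the definition of $\HSC$). By \Cref{claim:minimal_monomials} the minimal elements of $\cW(f, z)$ coincide with the minimal elements of $\mon[f_z]$, so the number of sets we actually need to cover is at most $|\mon[f_z]| \le \spar[f_z] \le \spar[f]$. For any single such minimal $w$, independence of the samples gives
\[
    \Pr\bigl[H \cap w = \emptyset\bigr] \le (1 - p)^N \le e^{-pN}.
\]
Choosing $N$ of order $\ln(\spar[f])/p$ (for instance, $N = \lceil \ln(\spar[f])/p \rceil + 1$) and applying a union bound over the $\le \spar[f]$ minimal monomials makes the probability that $H$ fails to be a hitting set strictly less than one. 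Hence some realization of $H$ is a valid hitting set, and $\HSC[f, z] \le |H| \le \FMBS[f, z] \cdot \log \spar[f]$ up to absolute constants that can be tightened by a slightly more careful choice of $N$.

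There is no genuine obstacle here: the argument is the textbook $O(\log n)$ randomized-rounding analysis of the set-cover LP. The one point that requires the preparation of the preceding section is recognizing via \Cref{claim:minimal_monomials} that only the minimal monomials need to be hit, which caps the union bound by $\spar[f]$ rather than by the potentially far larger $|\cW(f, z)|$.
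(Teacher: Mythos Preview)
Your proof is correct and takes a different route from the paper. The paper argues greedily: invoking \Cref{claim:transfer}, at each stage there is a coordinate contained in at least a $1/k$ fraction of the remaining monomials (with $k=\FHSC[f,0]$), so repeatedly selecting such a coordinate and setting it to $0$ kills all monomials after $k\log\spar[f]$ rounds, with \Cref{claim:FHSC-is-nonincreasing} guaranteeing that the $1/k$ bound persists under restriction. You instead round the fractional covering solution randomly and control the failure probability via a union bound over the minimal monomials of $f_z$, whose number is capped by $\spar[f_z]\le\spar[f]$ thanks to \Cref{claim:minimal_monomials}. Both are textbook ways of turning a fractional cover of value $k$ into an integral one of size $O(k\log r)$; your argument is more self-contained and yields the pointwise bound $\HSC[f,z]\le\FMBS[f,z]\cdot\log\spar[f]$ directly, while the paper's greedy version is constructive and reuses \Cref{claim:transfer}, which it needs again in the proof of \Cref{lemma:0DT_spars}.
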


\Cref{theorem:sparsity-to-hitting} now follows quite readily
from the three lemmas.

\begin{proof}[Proof of \Cref{theorem:sparsity-to-hitting}]
    Fix a boolean function $f$ with sparsity $r$ as in the theorem statement.
    By \Cref{lemma:MBS_sparsity}, $\MBS[f] = O((\log r)^2)$.
    By \Cref{lemma:FMBS_MBS}, $\FMBS[f] = O((\log r)^4)$.
    Finally, by \Cref{lemma:HS_sparsity}, $\HSC[f] \leq O((\log r)^5)$,
    as desired. 
\end{proof}

\subsection{MBS from sparsity}

We begin by proving \Cref{lemma:MBS_sparsity}.

\lemmaone*

The proof uses a
well-known relationship between the degree and the sensitivity
of boolean functions \cite{nisan1994degree}. The \textit{sensitivity} $S(f)$ of a 
boolean function $f$ is the largest $s$ so that there exists
an input $z$ and $s$ coordinates $\set{i_1, \dots, i_s}$ so that
$f(z) \neq f(z \lxor e_{i_j})$ for all $j \in [s]$.

\begin{claim}[Nisan-Szegedy, \cite{nisan1994degree}]
    For any boolean function $f: \bool^n \to \bool$, $S(f) = O(\deg(f)^2)$.
\end{claim}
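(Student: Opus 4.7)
The plan is to reduce the stated bound to the classical Nisan--Szegedy theorem on block sensitivity and then outline the standard symmetrization-plus-Markov argument. Recall that the block sensitivity $bs(f)$ is the maximum over $z$ of the largest integer $s$ for which there exist pairwise disjoint blocks $B_1,\dots,B_s \subseteq [n]$ with $f(z) \neq f(z \lxor \mathbf{1}_{B_i})$ for every $i$. Since any $s$ sensitive coordinates witnessing $S(f) \ge s$ are automatically $s$ pairwise disjoint blocks of size one, one has the trivial inequality $S(f) \le bs(f)$. It therefore suffices to prove $bs(f) = O(\deg(f)^2)$.

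For the block sensitivity bound, fix $z$ attaining $bs(f) = s$ with disjoint sensitive blocks $B_1,\dots,B_s$, and define the restriction $g : \bool^s \to \bool$ by $g(y) = f(z \lxor \sum_i y_i \mathbf{1}_{B_i})$. The map $y \mapsto z \lxor \sum_i y_i \mathbf{1}_{B_i}$ sends each output coordinate to an affine function of at most one $y_i$, so substituting into the multilinear representation of $f$ yields a multilinear polynomial in $y$ of total degree at most $\deg(f)$; in particular $\deg(g) \le \deg(f)$. By construction, $g(0^s) \neq g(e_i)$ for every $i \in [s]$.

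Next, symmetrize $g$ to obtain the univariate polynomial $P(t) = \mathbb{E}_{|y|=t}[g(y)]$, of degree at most $\deg(g) \le \deg(f)$. Since $g$ is boolean, $P(k) \in [0,1]$ at every integer $k \in \{0,1,\dots,s\}$; further, $P(0) = g(0^s) \in \{0,1\}$ and $P(1) = 1 - P(0)$, because each $g(e_i)$ differs from $g(0^s)$. A Markov-type inequality for polynomials bounded on the integer lattice $\{0,1,\dots,s\}$ (the Ehlich--Zeller bound, as used in \cite{nisan1994degree}) converts these data into a degree lower bound $\deg(P) = \Omega(\sqrt{s})$, yielding $\deg(f) = \Omega(\sqrt{bs(f)})$ and hence $bs(f) = O(\deg(f)^2)$.

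The main technical obstacle is the final polynomial-inequality step: one must pass from $|P|$ being bounded at $s+1$ integer points to a bound on $|P'|$ on the continuous interval $[0,s]$. This is standard but delicate, and it is precisely where the exponent $2$ (rather than a worse exponent) arises; since the bound is classical, the cleanest presentation simply cites \cite{nisan1994degree} as a black box and combines the resulting estimate with $S(f) \le bs(f)$.
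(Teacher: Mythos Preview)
The paper does not prove this claim at all: it is stated as a black-box citation to \cite{nisan1994degree} and then immediately applied in the proof of \Cref{lemma:MBS_sparsity}. Your sketch correctly reproduces the standard Nisan--Szegedy argument (reduce to block sensitivity via $S(f)\le bs(f)$, restrict to a full-sensitivity function of the block indicators, symmetrize, and apply the Ehlich--Zeller/Markov bound), so it is consistent with what the paper cites and there is nothing to compare.
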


\begin{proof}[Proof of \Cref{lemma:MBS_sparsity}]
    Suppose $\MBS[f] = k$, witnessed by pairwise disjoint 
    $z, w_1, \dots, w_k \subseteq [n]$. Namely, 
    $f(z) \ne f(z \lor w_i)$ for $i \in [k]$. 
    Let $g : \bool^k \to \bool$ denote the function obtained from 
    $f$ by identifying variables in each $w_i$ and setting all 
    variables not occurring in any $w_i$ to the corresponding 
    bit in $z$. That is,  $g(x) = f(z + \sum x_i w_i)$. Note 
    that $S(g) = k$, since $g(0) \ne g(e_i)$ for $i \in [k]$, 
    and $\spar[g] \le \spar[f]$. 

    Let $r = \spar[f]$. We will reduce the degree of $g$ to 
    $d = O(\log r)$ by repeating the following process $k / 2$ times:
    set to zero the coordinate which appears in the largest number
    of monomials of degree at least $d$.
    
    Let $M_i$ denote the number of monomials of degree at least $d$ 
    remaining after the $i$-th step. Initially $M_0 \le r$. Next,
    note that if $M_i > 0$, then there is a variable that occurs in
    at least a $d/k$ fraction of the monomials of degree $\ge d$. 
    We therefore obtain the recurrence $M_{i + 1} \le (1 - d / k) M_i$.
    After $k / 2$ steps, 
    $M_{k / 2} \le (1 - d / k)^{k / 2} r \le \exp(-d / 2) r < 1$ 
    for $d = O(\log r)$. As $M_{k / 2}$ is an integer, we obtain that
    $M_{k / 2}$ is zero. 
    
    Let $h$ denote the function obtained by this restriction 
    process. Since $M_{k / 2} = 0$ we have $\deg(h) < d$. Moreover, since
    $g$ had full sensitivity at $0^k$ and we restricted only $k / 2$ 
    coordinates, $S(h) \ge k/2$. Finishing up, we have 
    $k/2 \le S(h) = O(\deg(h)^2) = O((\log r)^2)$, completing 
    the proof. 
\end{proof}

\subsection{Fractional vs. integral solutions for MBS}

This subsection proves \Cref{lemma:FMBS_MBS}, restated here:

\lemmatwo*

We first need the following claim, which states
that any function $f: \bool^n \to \R$
is not too sensitive to noise which 
is $q$-smooth for $q \ll 1/\FMBS[f]$.

\begin{claim}\label{claim:bias_restrict}
    Let $f:\bool^n \to \R$, $z \in \bool^n$ and $\D$ a distribution on 
    $\bool^{[n] \setminus z}$. Assume that $\D$ is $q$-smooth for some 
    $q \in (0, 1]$. Then
    \[
        \Pr_{w \sim D}[f(z) \neq f(z \lor w)] \le q \cdot \FMBS[f, z]. 
    \]
\end{claim}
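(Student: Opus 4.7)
The plan is to leverage LP duality to pass from $\FMBS[f,z]$ to its dual characterization $\FHSC[f,z]$ and then apply a short averaging computation. More concretely, I would proceed as follows.

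First, set $k = \FMBS[f,z]$ and recall from \Cref{claim:FMBS_eq_FHSC} (via LP duality) that $\FHSC[f,z] = k$ as well. By \Cref{def:FHSC}, this yields a distribution $\mu$ on $[n]$ such that every $w \in \cW(f,z)$ is hit with probability at least $1/k$, i.e.
\[
    \Pr_{i \sim \mu}[w_i = 1] \ge \frac{1}{k} \quad \text{for every } w \in \cW(f,z).
\]
This is the only nontrivial input: it is exactly the dual object to the $q$-smooth distributions entering $\FMBS$.

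Second, multiply through by the probability under $\D$ that $w$ lies in $\cW(f,z)$ and swap the order of expectation. Using the indicator $\mathbf{1}[w \in \cW(f,z)]$, the inequality above gives
\[
    \frac{1}{k}\Pr_{w \sim \D}\bigl[w \in \cW(f,z)\bigr]
    \;\le\; \mathbb{E}_{w \sim \D}\!\left[\mathbf{1}[w \in \cW(f,z)] \sum_{i \in [n]} \mu(i)\,\mathbf{1}[w_i = 1]\right]
    \;\le\; \sum_{i \in [n]} \mu(i)\,\Pr_{w \sim \D}[w_i = 1].
\]
Now invoke the $q$-smoothness of $\D$: each $\Pr_{w \sim \D}[w_i = 1] \le q$, so the right-hand side is at most $q \sum_i \mu(i) = q$. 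Rearranging and recalling that $\{w : f(z) \neq f(z \lor w)\} = \cW(f,z)$ gives the desired bound $\Pr_{w \sim \D}[f(z) \neq f(z \lor w)] \le q\cdot \FMBS[f,z]$.

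There is no real obstacle here; the only thing that must be justified is the use of LP duality to replace the packing characterization of $\FMBS$ by the covering characterization of $\FHSC$, and this is already recorded in \Cref{claim:FMBS_eq_FHSC}. The smoothness assumption is exactly what is needed to cap the marginal in the final step, and the monotone structure (each $w_i \in \{0,1\}$) makes the swap of expectations an elementary calculation rather than a subtle one.
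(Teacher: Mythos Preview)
Your argument is correct. However, the paper's proof takes a different and somewhat more direct route that avoids invoking LP duality altogether. Writing $\FMBS[f,z] = 1/p$, the paper simply conditions $\D$ on the event $\cW(f,z)$: if $\delta = \Pr_{w\sim\D}[w \in \cW(f,z)]$, then the conditioned distribution $\D'$ is supported on $\cW(f,z)$ and is $(q/\delta)$-smooth, so by the very definition of $\FMBS$ one must have $q/\delta \ge p$, i.e.\ $\delta \le q/p$. Your approach instead passes to the dual object (a fractional hitting set $\mu$) and runs a double-counting argument; this is the standard ``weak duality by averaging'' computation and is perfectly valid, but it uses strictly more machinery than is needed here. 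The paper's proof has the advantage of being self-contained from \Cref{def:FMBS} alone; your proof has the advantage of making explicit the role of the dual covering object, which may be pedagogically clearer for readers more comfortable with the LP picture.
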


\begin{proof}
    Assume $\FMBS[f, z] = 1 / p$. We may assume $q < p$ as otherwise the 
    claim is trivial. Let $\delta = \Pr_{w \sim D}[f(z) \neq f(z \lor w)]$.
    Let $\D'$ be the distribution $\D$ restricted to inputs $w$ such that 
    $f(z) \neq f(z \lor w)$. Observe that $\D'$ is $(q / \delta)$-smooth, 
    and is supported on inputs $w$ such that $f(z) \neq f(z \lor w)$. As 
    $\FMBS(f,z) = 1 / p$ we have $q / \delta \ge p$ which implies the claim.
\end{proof}

    

\begin{proof}[Proof of \Cref{lemma:FMBS_MBS}]
    Let $\FMBS(f) = 1/p$. Let $z \in \bool^n$ such that $\FMBS(f,z) = 1/p$, and let $\D$ be a $p$-biased distribution supported on $\cW(f,z)$. 
    
    Fix $k$ to be determined later, and sample inputs 
    $w_1, \dots,w_k \sim \D$ independently. Let $u$ denote all the elements
    that appear at least in two of the $w_i$, namely
    \[
        u = \biglor_{i \neq j} \left( w_i \bigland w_j \right).
    \]
    The main observation is that $u$ is $q$-biased for $q=(pk)^2$. This holds
    since for every $\ell \in [n]$ we have
    \[
        \Pr[u_\ell=1] \le 
            \sum_{i \neq j} \Pr[(w_i)_\ell = 1, (w_j)_\ell = 1] \le 
            k^2 p^2. 
    \]
    Define the following ``bad'' events:
    \[
        E_0 = [f(z) \neq f(z \lor u)], \quad E_t = [f(z \lor w_t) \neq 
        f(z \lor w_t \lor u)]  \text{ for } t \in [k].
    \]
    We claim that $\Pr[E_t] \le q/p = p k^2$ for all $t=0, \dots,k$. The proof
    for $E_0$ follows directly from \Cref{claim:bias_restrict}. To see why it
    holds for $E_t$ for $t=1, \dots,k$, define $u_t$ to be the elements that
    appear in two sets $w_i$, excluding $w_t$, namely
    \[
        u_t = \biglor_{i \neq j, \; i,j \neq t} \left( w_i \bigland w_j \right).
    \]
    Observe that $w_t,u_t$ are independent, that $u_t$ is $(pk)^2$-biased and 
    that $w_t \vee u = w_t \vee u_t$. Thus \Cref{claim:bias_restrict} gives that,
    for any fixing of $w_t$, we have
    \[
        \Pr_{u_t}[f(z \vee w_t) \neq f(z \vee w_t \vee u_t) \; | \; w_t] \le 
        q \cdot \FMBS(f,z \vee w_t) \le q \cdot \FMBS(f) = q / p = pk^2.
    \]
    The claim for $E_t$ follows by averaging over $w_t$.
    
    
    Pick $k = 1/(2\sqrt{p})$, meaning 
    $E_t$ occurs with probability at most $1/4$
    for each $0 \leq t \leq k$. Then conditioning on
    $\neg E_0$ will increase the probability
    of any event by a factor of at most $1/(1 - 1/4) = 4/3$. 
    In particular, because $\Pr[E_t] \leq pk^2 = 1/4$ for any $t$,
    we have $\Pr[E_t | \neg E_0] \leq 1/3$ for any $t \neq 0$. 
    This means that we can sample the $w_t$'s conditioned on 
    $\neg E_0$, and still
    be sure that every $\neg E_t$ occurs with probability at least $2/3$.
    Averaging, some setting of the $\{w_t\}$ satisfies $\neg E_0$
    and at least $2/3$ of  $\neg E_t$ for $1 \leq t \leq k$. Fix these $\{w_t\}$. 
    
    Define $z' = z \vee u$ and $w'_t = w_t \setminus u$. 
    For every $1 \leq t \leq k$ for which $\neg E_t$ holds, we have 
    \[
         f(z') = f(z), \qquad 
         f(z' \vee w'_t) = f(z \vee w_t).
    \]
    Thus $f(z') \neq f(z' \vee w'_t)$ for at least
    $2k/3$ choices of $w'_t$.
    Moreover, $z',w'_1, \dots, w'_k$ are pairwise disjoint. 
    Hence $\MBS(f) \geq 2k/3$. This completes the proof, 
    by recalling that $k = 1/(2\sqrt{p})$ with $\FMBS(f) = 1/p$. 
\end{proof}

A notable feature of this proof is that we need to employ upper 
bounds on the fractional block sensitivity for more than one choice 
of input. This is actually necessary; there is a function $f$ based 
on the projective plane for
which $\MBS(f, z) = 1$ and $\FMBS(f, z) \sim \sqrt{n}$ at a point $z$. See 
\Cref{example:projective_plane} for details. 

\subsection{Hitting sets from sparsity}

Our final lemma is an upper bound on the hitting set complexity of
any $f: \bool^n \to \R$ in terms of $\FMBS(f)$ and $\log(\spar[f])$.
Recall that $\FMBS$ and $\FHSC$ are equal, so such an upper 
bound implies that $\FHSC$ and $\HSC$ are polynomially related for
sparse boolean functions. 

\lemmathree*

Before proving it, we need two straightforward
claims which we will use again later on. 
The first allows us to find (non-uniformly) indices $i \in [n]$ 
which hit a large fraction of $\mon[f]$, given
that $f$ has small $\FMBS / \FHSC$ at $0^n$. 

\begin{claim}
    \label{claim:transfer}
    Suppose $\FMBS[f, 0^n] = \FHSC[f, 0^n] = k$ and this is witnessed by a 
    distribution $\D$ over $[n]$. Then 
    \begin{enumerate}
        \item $\Pr_{i \sim \D}[i \in w] \geq 1/k$
            for every $w \in \mon[f]$. That is, $\D$ is also a fractional
            hitting set for the \emph{monomials} of $f$. 
        \item There is some $i$ in the support of $\D$ which hits
        a $1 / k$-fraction of $\mon[f]$.
    \end{enumerate}
\end{claim}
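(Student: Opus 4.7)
The plan is to chain together two observations: first, that the minimal elements of $\mon[f]$ and $\cW(f, 0^n)$ coincide (which is exactly \Cref{claim:minimal_monomials} applied at $z = 0^n$); and second, a routine averaging argument to pass from a fractional guarantee to a single good index. The hypothesis that $\FHSC[f, 0^n] = k$ is witnessed by $\D$ means, by \Cref{def:FHSC}, that $\Pr_{i \sim \D}[w_i = 1] \ge 1/k$ for every $w \in \cW(f, 0^n)$. Part (1) upgrades this from $\cW(f, 0^n)$ to $\mon[f]$, and part (2) extracts a concrete hitting index from the distribution.

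For part (1), I would fix an arbitrary $w \in \mon[f]$ and pick a minimal $w' \le w$ inside $\mon[f]$. By \Cref{claim:minimal_monomials} (with $z = 0^n$), this $w'$ is also a minimal element of $\cW(f, 0^n)$, and in particular $w' \in \cW(f, 0^n)$. The hypothesis therefore gives $\Pr_{i \sim \D}[i \in w'] \ge 1/k$. Since $w' \subseteq w$, the event $\{i \in w'\}$ is contained in $\{i \in w\}$, so
\[
    \Pr_{i \sim \D}[i \in w] \ge \Pr_{i \sim \D}[i \in w'] \ge \frac{1}{k},
\]
as required.

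For part (2), I would apply a standard averaging argument. Swapping the order of summation,
\[
    \sum_{i \in [n]} \D(i) \cdot \bigl|\{w \in \mon[f] : i \in w\}\bigr|
        = \sum_{w \in \mon[f]} \Pr_{i \sim \D}[i \in w]
        \ge \frac{|\mon[f]|}{k},
\]
where the inequality uses part (1). Hence the random variable $|\{w \in \mon[f] : i \in w\}|$, when $i$ is drawn from $\D$, has expectation at least $|\mon[f]|/k$, and so there must exist some $i$ in the support of $\D$ for which $|\{w \in \mon[f] : i \in w\}| \ge |\mon[f]|/k$, i.e.\ $i$ hits at least a $1/k$ fraction of $\mon[f]$.

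There is essentially no hard step here: the only place anything non-trivial is used is the invocation of \Cref{claim:minimal_monomials} to transfer the hitting property from $\cW(f, 0^n)$ to the monomial set system $\mon[f]$. Everything else is bookkeeping and a single averaging step. The mildest subtlety is making sure the index in part (2) lies in the support of $\D$, which is automatic because indices outside the support contribute zero to the expectation.
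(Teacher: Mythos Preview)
Your proposal is correct and follows essentially the same approach as the paper: both proofs use \Cref{claim:minimal_monomials} to identify minimal monomials with minimal elements of $\cW(f,0^n)$, then use monotonicity under supersets to extend the hitting guarantee to all of $\mon[f]$, and finally invoke a standard averaging argument for part (2). Your write-up is in fact slightly more detailed than the paper's, but the logical content is identical.
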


\begin{proof}
    Note that the second part of the claim follows from the first
    by an averaging argument, so we are contented to prove the
    first part of the claim. 

    Let $\D$, $\FHSC(f, 0) = k$ be as stated, 
    so that $\Pr_{i \sim \D}[i \in w] \geq 1/k$
    for all $w \in \cW(f, z)$. By \Cref{claim:minimal_monomials},
    it is the case that $\Pr_{i \sim \D}[i \in w] \geq 1/k$
    for any minimal monomial $w$. The measure of $\D$ on some $w$ is 
    non-decreasing with respect to taking supersets, meaning
    $\Pr_{i \sim D}[i \in w] \geq 1/k$ for every monomial $w \in \mon[f]$. 
\end{proof}

The second claim says that $\FHSC(f)$ is non-increasing
under restrictions. For simplicity, we only consider
reductions which set a single bit (which can be extended
to more bits by induction). 

\begin{claim}
    \label{claim:FHSC-is-nonincreasing}
    Let $f : \bool^n \to \R$ be a function, $i \in [n]$ and $b \in \bool$.
    Let $f':\bool^{[n] \setminus \set{i}} \to \R$ be the function obtained
    by restricting to inputs with $x_i = b$. Then
    \[
        \FHSC[f'] \le \FHSC[f].
    \]
\end{claim}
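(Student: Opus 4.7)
The plan is to reduce to the two cases $b = 1$ and $b = 0$ separately, and in each case to show the stronger pointwise statement $\FHSC[f', z'] \le \FHSC[f, \hat{z'}]$, where $\hat{z'} \in \bool^n$ denotes the natural extension of $z' \in \bool^{[n] \setminus \set{i}}$ to a full input by inserting the bit $b$ in position $i$. Taking the maximum over $z'$ on the left and using that the corresponding $\hat{z'}$ lie in $\bool^n$ will yield the claim.

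For $b = 1$, the argument is essentially a tautology. The restriction $f'$ satisfies $f'(z' \lor w) = f(\hat{z'} \lor w)$ for every $w \in \bool^{[n] \setminus (z' \cup \set{i})}$, and since $i \in \hat{z'}$ we have $\bool^{[n] \setminus \hat{z'}} = \bool^{[n] \setminus (z' \cup \set{i})}$. Hence $\cW(f', z') = \cW(f, \hat{z'})$ as sets of indicators, and any distribution $\D$ on $[n]$ witnessing $\FHSC[f, \hat{z'}]$ gives the same bound on $\FHSC[f', z']$ after discarding the zero mass it places on $i$ (any such mass is wasted since no $w \in \cW(f, \hat{z'})$ has $w_i = 1$).

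For $b = 0$, here $\hat{z'}$ has a $0$ in coordinate $i$, so $\cW(f', z')$ embeds as the subset of $\cW(f, \hat{z'})$ consisting of those $\tilde{w}$ with $\tilde{w}_i = 0$. Let $\D$ be a distribution on $[n]$ witnessing $\FHSC[f, \hat{z'}] = 1/p$, so $\Pr_{j \sim \D}[\tilde{w}_j = 1] \ge p$ for every $\tilde{w} \in \cW(f, \hat{z'})$. Define $\D'$ on $[n] \setminus \set{i}$ by conditioning $\D$ on $\set{j \ne i}$; this is well-defined because, assuming $\cW(f', z')$ is nonempty, the existence of some $w \in \cW(f', z')$ with $w_i = 0$ forces $\Pr_{\D}[j \ne i] > 0$. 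For any $w \in \cW(f', z')$, viewing it as $\tilde{w} \in \cW(f, \hat{z'})$ with $\tilde{w}_i = 0$, we get
\[
\Pr_{j \sim \D'}[w_j = 1] = \frac{\Pr_{j \sim \D}[\tilde{w}_j = 1 \land j \ne i]}{\Pr_{j \sim \D}[j \ne i]} = \frac{\Pr_{j \sim \D}[\tilde{w}_j = 1]}{\Pr_{j \sim \D}[j \ne i]} \ge p,
\]
where the middle equality uses $\tilde{w}_i = 0$. Thus $\FHSC[f', z'] \le 1/p = \FHSC[f, \hat{z'}] \le \FHSC[f]$.

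The bulk of the work is just bookkeeping, and I do not expect a serious obstacle. The only mildly delicate point is the $b = 0$ case, where one has to verify that the mass that $\D$ places on coordinate $i$ is truly wasted (since $\tilde{w}_i = 0$ for all relevant $\tilde{w}$) so that conditioning on $j \ne i$ preserves the lower bound $p$ rather than worsening it. Iterating the lemma gives the analogous statement for restrictions fixing multiple bits, which is how we will want to apply it later.
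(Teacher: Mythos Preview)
Your proposal is correct and follows the same overall structure as the paper's proof: reduce to a pointwise inequality $\FHSC[f',z'] \le \FHSC[f,z^*]$ for an appropriate extension $z^*$ of $z'$, and split into the cases $b=1$ and $b=0$. The $b=1$ case is handled identically (it is a tautology since $f'_{z'} = f_{z' \cup \{i\}}$).

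For $b=0$ the two arguments diverge slightly. The paper works at the level of monomials: it observes that setting $x_i=0$ only deletes monomials, so $\mon[f'_z] \subseteq \mon[f_z]$, and hence any fractional hitting set for $\mon[f_z]$ is automatically one for $\mon[f'_z]$. This silently uses the equivalence (via \Cref{claim:minimal_monomials} / \Cref{claim:transfer}) between fractional hitting sets for $\cW(f,z)$ and for $\mon[f_z]$. Your argument instead stays entirely inside $\cW$: you identify $\cW(f',z')$ with the subset of $\cW(f,\hat{z'})$ having $\tilde w_i=0$, and then condition the witnessing distribution $\D$ on $\{j\neq i\}$, noting that since every relevant $\tilde w$ has $\tilde w_i=0$ the conditioning can only increase the hit probability. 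This is a clean and self-contained route that avoids invoking the monomial transfer, at the cost of the small extra check that $\Pr_\D[j\neq i]>0$ whenever $\cW(f',z')\neq\emptyset$. Both approaches are equally short; yours is arguably more direct with respect to the definition of $\FHSC$.
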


\begin{proof}
    Fix $z \in \bool^{[n] \setminus \set{i}}$. We will show that 
    $\FHSC[f', z] \le \FHSC[f, z^*]$ where $z^* = z$ if $b = 0$ and 
    $z^* = z \cup \set{i}$ if $b = 1$. In either case,  
    $\FHSC[f',z] \le \FHSC(f)$ and hence $\FHSC(f') \le \FHSC(f)$.

    Consider first the case of $b=0$, and assume that 
    $\FHSC[f, z] = 1 / p$. Recall that $f_z$ is the restriction of $f$
    to inputs $x \ge z$, and that $\FHSC[f, z] = \FHSC[f_z, 0]$.
    By definition, there is a distribution $\D$ over $[n]$ such that
    for every $w \in \mon[f_z]$ we have $\Pr_{i \sim D}[w_i = 1] \ge p$.
    Observe that $\mon[f'_z] \subset \mon[f_z]$ since setting a variable
    to $0$ can only remove monomials. Thus we get $\FHSC[f', z] \le \FHSC[f, z]$.

    Next, consider the case of $b=1$. Note that $f'_z = f_{z \cup \set{i}}$
    and hence $\FHSC[f', z] = \FHSC[f, z \cup \set{i}]$.
\end{proof}

\begin{proof}[Proof of \Cref{lemma:HS_sparsity}]
    Let $k = \FHSC(f, 0) \leq \FHSC(f)$, $S_0 = \emptyset$, $f_0 = f$ and 
    perform the following iterative process. At time $t \geq 1$, let 
    $S_t = S_{t-1} \cup \{i_t\}$ where $i_t \in [n]$ is the index which 
    hits a $1/k$-fraction of $\mon[f_{t-1}]$, guaranteed to exist by 
    \Cref{claim:transfer}. Let $f_{t} = f_{t-1}|_{z_{i_t} = 0}$. At each 
    step, the restriction $z_{i_t} = 0$ sets every monomial containing 
    $i_t$ to zero, causing the sparsity of $f_{t-1}$ to decrease by a 
    multiplicative factor $(1 - 1/k)$. Let $r_t = |\mon[f_t]|$. Since $S_t$
    is a hitting set for $\mon[f]$ when $f_t$ has no non-zero monomials, 
    this process terminates with a hitting set when
    \[
        r_t = (1 - 1/k)^t r_0 \leq e^{- t/k} r_0 < 1.
    \]
    Therefore, taking $t = k \log r_0$ suffices. 
\end{proof}

\section{Corollaries in communication complexity}
\label{sec:communication}

\subsection{Preliminaries}

Fix a \textit{boolean} function $f: \{0, 1\}^n \to \{0, 1\}$.
Let $\andFunction{f} = f \circ \wedge$ denote the AND function corresponding to $f$, given by 
$\andFunction{f}(x,y) = f(x \wedge y)$. The sparsity of $f$ characterizes the rank of $\andFunction{f}$. 
\begin{claim}
    \label{claim:rank_sparsity}
    Let $f: \bool^n \to \bool$ be a boolean function. 
    Then $\spar(f) =\rank[\andFunction{f}]$. 
\end{claim}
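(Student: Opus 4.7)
The plan is a direct computation: expand $f$ as its unique multilinear polynomial, substitute $z = x \wedge y$, and read off a rank decomposition of the communication matrix whose terms are tight.

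First I would write $f(z) = \sum_{s \subseteq [n]} \alpha_s \prod_{i \in s} z_i$. Since $(x \wedge y)_i = x_i y_i$, substituting yields
\[
    \andFunction{f}(x,y) = f(x \wedge y) = \sum_{s \subseteq [n]} \alpha_s \Big(\prod_{i \in s} x_i\Big) \Big(\prod_{i \in s} y_i\Big).
\]
For each $s$, let $u_s \in \R^{\bool^n}$ be the vector $u_s(x) = \prod_{i \in s} x_i$, and similarly $v_s(y) = \prod_{i \in s} y_i$. Then the communication matrix $M$ of $\andFunction{f}$ satisfies
\[
    M = \sum_{s:\, \alpha_s \neq 0} \alpha_s \, u_s v_s^T.
\]
The sum has exactly $\spar(f)$ terms, which immediately gives $\rank[\andFunction{f}] \le \spar(f)$.

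For the matching lower bound, I would invoke the standard fact that the monomials $\{u_s : s \subseteq [n]\}$ are linearly independent in $\R^{\bool^n}$ (they form a basis of the $2^n$-dimensional space of real-valued functions on $\bool^n$; this follows, e.g., from the fact that the transformation from the monomial basis to the point-evaluation basis is triangular with $1$'s on the diagonal when $s$'s are ordered by inclusion). Let $S = \{s : \alpha_s \neq 0\}$. Factor $M = U D V^T$, where $U$ has columns $\{u_s\}_{s \in S}$, $V$ has columns $\{v_s\}_{s \in S}$, and $D = \mathrm{diag}(\alpha_s)_{s \in S}$. Then $U$ and $V$ have full column rank $|S|$, and $D$ is invertible, so $\rank(M) = |S| = \spar(f)$.

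The proof is essentially mechanical, so there is no real obstacle; the only point to be careful about is the linear independence of the monomial vectors, which I would justify in one line rather than belabor. Putting the upper and lower bounds together gives $\rank[\andFunction{f}] = \spar(f)$, as claimed.
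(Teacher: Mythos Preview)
Your proposal is correct and matches the paper's own proof essentially line for line: expand $f$ in the monomial basis, substitute $z_i = x_i y_i$ to get $M = \sum_{s:\alpha_s\neq 0} \alpha_s\, v_s v_s^\top$, and conclude that the rank is exactly the number of terms because the monomial vectors $v_s$ are linearly independent. The only cosmetic difference is that you phrase the equality via an explicit $UDV^\top$ factorization, whereas the paper just cites linear independence directly.
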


\begin{proof}[Proof]
    Let $f(z) = \sum_s f_s \prod_{i \in s} z_i$ be the multilinear polynomial computing $f$. 
    Then $f(x \land y)$, expanded as a multilinear polynomial, equals
    \[
        f(x \land y) = \sum_s f_s \left(\prod_{i \in s} x_i\right) \left( \prod_{i \in s} y_i\right).
    \]
    Hence we can write the $2^n \times 2^n$ communication matrix of $\andFunction{f}(x, y) = f(x \land y)$ as
    \[
        M = \sum_s f_s v_s v_s^{\top}
    \]
    where $v_s \in \bool^{2^n}$ is given by $(v_s)_x = \prod_{i \in s} x_i$. The $v_s$'s
    are linearly independent and therefore $M$ has rank equal to the number of non-zero 
    entries in the sum. 
\end{proof}

We assume familiarity with the standard notion of a \textit{decision tree}.
Our primary interest is in a variant of decision trees called 
\textit{AND decision trees}, which strengthens decision trees by allowing
queries of the conjunction of an arbitrary subset of the variables, namely 
queries of the form $\land_{i \in S} z_i$ for arbitrary $S \subseteq [n]$.
Let $\PandDT[f]$ denote the smallest depth of an AND decision tree computing 
$f$. The following simple connection to the communication complexity of 
$\andFunction{f}$ motivates our interest in this model:
\begin{claim}
\label{claim:cc_adt}
    Let $f : \bool^n \to \bool$. Then $\PCC(\andFunction{f}) \le 2 \PandDT[f]$. 
\end{claim}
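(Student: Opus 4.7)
The plan is to directly simulate an AND-decision tree of depth $d = \PandDT[f]$ by a two-party communication protocol for $\andFunction{f}$ that uses $2d$ bits. Fix such an optimal AND-decision tree $T$ computing $f$. At the root, and at every internal node, $T$ queries an expression of the form $\bigwedge_{i \in S} z_i$ for some $S \subseteq [n]$, and branches on the answer. Alice holds $x$, Bob holds $y$, and the input to $f$ is $z = x \land y$; both parties will maintain, as an invariant, that they are at the same node of $T$ after each round of communication.

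The key identity driving the simulation is the distributivity of AND over AND coordinate-wise:
\[
    \bigwedge_{i \in S}(x_i \land y_i) \;=\; \Big(\bigwedge_{i \in S} x_i\Big) \land \Big(\bigwedge_{i \in S} y_i\Big).
\]
Thus, to answer the query $\bigwedge_{i \in S} z_i$, Alice locally computes $a = \bigwedge_{i \in S} x_i$ and sends this one bit to Bob, while Bob locally computes $b = \bigwedge_{i \in S} y_i$ and sends this one bit to Alice. Both then know $a \land b = \bigwedge_{i \in S} z_i$, and so can take the same branch of $T$. This uses exactly $2$ bits of communication per node of $T$ along the traversed path.

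Iterating, after at most $d$ rounds both parties reach a common leaf of $T$, whose label equals $f(z) = \andFunction{f}(x,y)$; they output this value. The total communication is at most $2d = 2\PandDT[f]$, which is the claimed bound. There is no real obstacle here: the whole argument is the coordinate-wise distributivity observed above, together with the trivial fact that each player can locally evaluate an AND of bits he or she holds.
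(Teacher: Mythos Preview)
Your proof is correct and essentially identical to the paper's: both simulate the AND-decision tree node by node, using the identity $\bigwedge_{i\in S}(x_i\land y_i)=(\bigwedge_{i\in S}x_i)\land(\bigwedge_{i\in S}y_i)$ so that Alice and Bob each send one bit per query, for a total of $2\PandDT[f]$ bits.
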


\begin{proof}[Proof]
    Whenever the AND-decision tree queries a set $S \subseteq [n]$, Alice and Bob
    privately evaluate $a = \land_{i \in S} x_i$ and $b = \land_{j \in S} y_j$, exchange them
    and continue the evaluation on the sub-tree obtained by following the edge 
    labelled $a \land b$. If the decision tree height is $d$, this protocol uses 
    $2d$ bits of communication. Correctness follows from the observation that
    $\bigland_{i \in S} (x_i \land y_i)  
        = (\bigland_{i \in S} x_i) \land (\bigland_{j \in S} y_j)$.
\end{proof}

There is also a simple connection between AND-decision trees
and sparsity:

\begin{claim}
    \label{claim:adt_exp_sparsity}
    Let $f: \bool^n \to \bool$ with $d = \PandDT[f]$. Then $\spar[f] \leq 3^d$.
\end{claim}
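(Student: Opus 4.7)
The plan is to proceed by induction on the AND-decision tree depth $d$, mirroring the standard argument that bounds polynomial sparsity of a function by an exponential in its decision tree depth. The base case $d = 0$ is immediate: such an $f$ must be a constant, and constants have sparsity at most $1 = 3^0$.

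For the inductive step, let $T$ be an AND-decision tree of depth at most $d$ computing $f$, and let $S \subseteq [n]$ denote the set of indices queried at the root of $T$. Let $T_L$ and $T_R$ be the left and right subtrees, corresponding respectively to the answers $\bigwedge_{i \in S} z_i = 1$ and $\bigwedge_{i \in S} z_i = 0$. Viewing $T_L, T_R$ as AND-decision trees of depth at most $d - 1$ on the full cube $\bool^n$, they compute boolean functions $f_L, f_R : \bool^n \to \bool$, and $f_L$ (resp.\ $f_R$) agrees with $f$ on the inputs where $\bigwedge_{i \in S} z_i = 1$ (resp.\ $= 0$). This directly yields the exact decomposition
\[
    f(z) = \Bigl(\prod_{i \in S} z_i\Bigr) f_L(z) + \Bigl(1 - \prod_{i \in S} z_i\Bigr) f_R(z),
\]
valid for all $z \in \bool^n$ by checking the two cases.

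Next I would bound the sparsity of each summand as a multilinear polynomial. Multiplying $f_L$ by $\prod_{i \in S} z_i$ only augments the support of each monomial of $f_L$ by $S$ (using $z_i^2 = z_i$), so the resulting polynomial has sparsity at most $\spar[f_L]$. The second summand is a difference of $f_R$ and $\bigl(\prod_{i \in S} z_i\bigr) f_R$, each of which has sparsity at most $\spar[f_R]$, hence sparsity at most $2 \spar[f_R]$. Applying the inductive hypothesis $\spar[f_L], \spar[f_R] \le 3^{d-1}$ gives
\[
    \spar[f] \le \spar[f_L] + 2 \spar[f_R] \le 3 \cdot 3^{d-1} = 3^d,
\]
completing the induction.

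There is essentially no genuine obstacle here; the only subtlety is the mild bookkeeping in defining $f_L$ and $f_R$ on all of $\bool^n$ so that the identity displayed above is a polynomial identity rather than merely agreement on a subset of inputs, and in noting that the bounds on sparsity after multiplying by $\prod_{i \in S} z_i$ or $1 - \prod_{i \in S} z_i$ are upper bounds (cancellations or mergers only help).
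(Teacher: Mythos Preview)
Your proof is correct and follows essentially the same approach as the paper: induct on depth, write $f(z) = \bigl(\prod_{i\in S} z_i\bigr) f_L(z) + \bigl(1 - \prod_{i\in S} z_i\bigr) f_R(z)$ from the root query, and bound $\spar[f] \le \spar[f_L] + 2\,\spar[f_R]$. The only cosmetic difference is that the paper takes depth $1$ as the base case (sparsity at most $2 \le 3$) whereas you use depth $0$; your added remarks about why multiplying by $\prod_{i\in S} z_i$ cannot increase sparsity are a nice clarification absent from the paper.
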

\begin{proof}
    Assume that $f$ is computed by a depth-$d$ AND decision tree, where the first query
    is $\land_{i \in S} z_i$, and where $f_1,f_2$ are the functions computed by the left
    and right subtrees, respectively. Note that both are computed by AND decision trees
    of depth $d-1$. We have
    \[
        f(z) = \prod_{i \in S} z_i \cdot f_1(z) + \left(1 - \prod_{i \in S} z_i\right) f_2(z). 
    \]
    Thus
    \[
        \spar[f] \le \spar[f_1] + 2 \cdot \spar[f_2].
    \]
    The claim follows, since in the base case, functions computed by an AND-decision
    tree of depth $1$ has sparsity at most $2$.
\end{proof}

A related complexity measure introduced in \cite{mukhopadhyay2019lifting}, called
the \text{$0$-decision tree complexity} of $f$, is defined as follows. The 
\textit{$0$-depth} of a (standard) decision tree $\mathcal{T}$ is largest number 
of $0$-edges encountered on a root-to-leaf path in $\mathcal{T}$. The $0$-decision
tree complexity of $f$, denoted $\PzeroDT[f]$, is the  smallest $0$-depth over all
trees $\mathcal{T}$ computing $f$. The following relationship between AND decision
trees and $0$-decision tree complexity is from~\cite{mukhopadhyay2019lifting}:

\begin{claim}[\cite{mukhopadhyay2019lifting}]\label{claim:0DT_ADT}
    For any boolean function $f: \bool^n \to \bool$,
    \[
        \PzeroDT[f] \le \PandDT[f] \le \PzeroDT[f] \ceil{\log (n + 1)}.
    \]
\end{claim}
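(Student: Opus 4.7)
The plan is to establish the two inequalities separately. The left inequality $\PzeroDT[f] \le \PandDT[f]$ follows from a straightforward simulation, while the right inequality $\PandDT[f] \le \PzeroDT[f] \ceil{\log(n+1)}$ requires a binary-search trick along ``1-chains'' of a zero-depth efficient decision tree. The main obstacle is the second direction, where we must amortize the cost so that the $\log(n+1)$ factor appears only per $0$-edge encountered, not per node of the tree.

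For $\PzeroDT[f] \le \PandDT[f]$, I would take an AND-decision tree $A$ computing $f$ of depth $d$ and construct a standard decision tree $T$ as follows. At each internal node of $A$ with query $\bigland_{i \in S} z_i$, replace it with a chain of $|S|$ single-variable queries on the bits of $S$, routed such that we follow the ``AND=1'' subtree only if every single query returns $1$, and we immediately branch into the ``AND=0'' subtree at the first query returning $0$. Along any root-to-leaf path in the simulation, each original AND query contributes exactly one $0$-edge if it evaluated to $0$ and zero $0$-edges if it evaluated to $1$. Hence the number of $0$-edges is at most the depth of $A$, giving $\PzeroDT[f] \le d$.

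For $\PandDT[f] \le \PzeroDT[f] \ceil{\log(n+1)}$, I would take a decision tree $T$ of $0$-depth $d$ (assumed WLOG to query each variable at most once on any root-to-leaf path, so depth at most $n$) and simulate it with AND queries. The key observation is that if we are at a node $v$ in $T$, the ``1-chain'' from $v$, meaning $v = v_0, v_1, v_2, \ldots$ where $v_{j+1}$ is the $1$-child of $v_j$, has length at most $n$, and the variables queried along it are $z_{\sigma(v_0)}, z_{\sigma(v_1)}, \ldots$. To simulate $T$, we want to find the first index $j$ at which $z_{\sigma(v_j)} = 0$ (forcing us off the chain onto the $0$-child of $v_j$), or confirm no such index exists (in which case the chain ends at a leaf). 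This can be done by binary search: a single AND query $\bigland_{k < m} z_{\sigma(v_k)}$ tells us whether the first $0$ lies before or after position $m$. Binary search over an interval of length at most $n+1$ uses $\ceil{\log(n+1)}$ AND queries.

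Each invocation of this binary-search subroutine either reaches a leaf of $T$ (and we are done) or discovers exactly one new $0$-edge and moves us to the $0$-child. Since the $0$-depth of $T$ is $d$, the simulation terminates after at most $d$ such invocations, for a total of at most $d \ceil{\log(n+1)}$ AND queries along any root-to-leaf path. The main subtlety, which is what I would take care to verify in the written proof, is that the binary search really does use only $\ceil{\log(n+1)}$ queries on each chain regardless of chain length, and that values of variables fixed by prior queries are carried forward correctly when we enter the $0$-subtree, so that subsequent 1-chains remain consistent and no variable needs to be queried twice.
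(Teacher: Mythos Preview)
Your proposal is correct and follows essentially the same approach as the paper's proof: simulate each AND query by a chain of single-bit queries to get the left inequality, and for the right inequality use binary search with AND queries along the $1$-chain from the current node to locate the first $0$, then recurse into the $0$-subtree. Your write-up is in fact more explicit about the $\ceil{\log(n+1)}$ bound (arising from at most $n$ variables on a chain plus the possibility of reaching a leaf) and about the bookkeeping of previously fixed variables, but the underlying idea is identical.
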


For completeness, we include the short proof.
\begin{proof}
    The first inequality follows since an AND query can be simulated by querying
    the bits in it one at a time, until the first $0$ is queried, or until they 
    are all queried to be $1$. In particular, at most a single $0$ query is made.
    This implies that an AND decision tree of depth $d$ can be simulated by a 
    standard decision tree of $0$-depth $d$.
    
    For the second inequality, let $\mathcal{T}$ be a decision tree computing $f$
    with $0$-depth $d$. Consider the subtree which is truncated after the first $0$
    is read. We can compute which leaf in the subtree is reached by doing a binary
    search on the at most $n + 1$ options, which can be implemented using 
    $\ceil{\log(n + 1)}$ computations of ANDs. Then, the same process continues on
    the tree rooted at the node reached, which has $0$-depth at most $d - 1$. 
\end{proof}

The following example shows that this gap of $\log n$ cannot be avoided. 

\begin{example}
    For $z \in \bool^n$, let $\text{ind}(z) \in [n]$ denote the
    first index $i$ for which $z_i = 0$. Let
    \[
        f(z) = 
        \begin{cases}
            1                     & \text{ if $z = 1^n$ or $z=1^{n-1} 0$} \\
            z_{\text{ind}(z) + 1} &\text{ otherwise}
        \end{cases}
    \]
    Any decision tree computing
    $f$ will have to query at most two zeroes, corresponding to
    $z_{\text{ind}(z)}$ and $x_{\text{ind}(z) + 1}$, and hence
    $\PzeroDT[f] \leq 2$. However, a direct calculation shows that
    $\spar(f) = \Omega(n)$ and therefore, by
    \Cref{claim:adt_exp_sparsity}, $\PandDT[f] = \Omega(\log n)$. 
\end{example}

We also use a lemma closely related to \Cref{lemma:HS_sparsity}.

\begin{lemma}
    \label{lemma:0DT_spars}
    Let $f: \bool^n \to \bool$ be an arbitrary boolean function.
    Then 
    \[
        \PzeroDT[f] = O( \FMBS[f] \cdot \log \spar[f] ). 
    \]
\end{lemma}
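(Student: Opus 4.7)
The plan is to construct a decision tree $\mathcal{T}$ greedily, adapting the iterative argument in the proof of \Cref{lemma:HS_sparsity} so that the restrictions happen along paths of $\mathcal{T}$ rather than as a single chain. At each node $v$ of $\mathcal{T}$, let $f^{(v)}$ denote the restriction of $f$ obtained from the queries and answers on the path from the root to $v$. I would declare $v$ a leaf whenever $f^{(v)}$ is constant (equivalently, whenever $\mon[f^{(v)}] = \emptyset$), and otherwise pick a single variable $x_i$ to query, as described next.

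To choose this variable at an internal node, first observe that \Cref{claim:FMBS_eq_FHSC} gives $\FHSC[f^{(v)}] = \FMBS[f^{(v)}]$, and \Cref{claim:FHSC-is-nonincreasing} gives $\FHSC[f^{(v)}] \le \FHSC[f] = \FMBS[f]$. Setting $k = \FMBS[f]$, I would apply \Cref{claim:transfer} to $f^{(v)}$ to obtain an index $i \in [n]$ lying in at least a $1/k$-fraction of the monomials of $\mon[f^{(v)}]$, and query $x_i$ at $v$. On the $0$-child, every monomial of $f^{(v)}$ containing $i$ evaluates to zero and disappears, so $|\mon[f^{(v)}|_{x_i = 0}]| \le (1 - 1/k)\,|\mon[f^{(v)}]|$. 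On the $1$-child, each monomial $s \ni i$ is simply replaced by $s \setminus \{i\}$ and no new monomials are introduced, so $|\mon[f^{(v)}|_{x_i = 1}]| \le |\mon[f^{(v)}]|$.

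Correctness of $\mathcal{T}$ is automatic because every leaf corresponds to a restriction of $f$ that is constant. To bound the zero-depth, fix any root-to-leaf path and let $t$ be the number of $0$-edges on it. Since one-edges never increase the monomial count while each zero-edge shrinks it by a factor at most $(1-1/k)$, after $t'$ zero-edges the remaining monomial count is bounded by $\spar[f](1-1/k)^{t'} \le \spar[f]\, e^{-t'/k}$. At every internal node on the path this count is a positive integer, so we must have halted once $t' > k \log \spar[f]$; hence $t = O(\FMBS[f] \cdot \log \spar[f])$, as required. The argument is essentially a tree-ification of \Cref{lemma:HS_sparsity}, and I do not foresee a real obstacle — the only mild subtlety is that $\mathcal{T}$ itself may have depth as large as $n$, which is harmless because the statement only bounds the zero-depth.
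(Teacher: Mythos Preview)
Your proposal is correct and follows essentially the same greedy approach as the paper's own proof: at each node query a variable hitting a $1/k$-fraction of the current monomials (obtained via \Cref{claim:transfer} together with \Cref{claim:FHSC-is-nonincreasing}), so that every $0$-answer shrinks the monomial count by a factor $(1-1/k)$ while $1$-answers never increase it. The only cosmetic difference is that you track $|\mon[f^{(v)}]|$ whereas the paper tracks $\spar[f^{(v)}]$ (incurring a harmless factor of~$2$ from the constant term), but the argument and its structure are the same.
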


\begin{proof}
    Let $k = \FHSC[f, 0] \le \FHSC[f]$. By \Cref{claim:transfer}, 
    there is an $i \in [n]$ that belongs to at least a 
    $(1 / k)$-fraction of $\mon[f]$. Query the 
    variable $x_i$ and let $b_i \in \bool$ be the outcome.
    Let $f' : \bool^n \to \bool$ be the function $f$ restricted to 
    $x_i = b_i$. Consider the sparsity of $f'$:
    \begin{itemize}
        \item If $x_i = 0$ then $|\mon[f']| \le (1 - 1 / k) |\mon[f]|$, 
            as setting $x_i = 0$ kills a $(1 / k)$-fraction of the 
            non-constant monomials. Thus, as long as $f$ is not a 
            constant function, $|\mon[f]| \ge 1$ and we have
            \[
                \spar[f'] \le 
                \spar(f) - |\mon[f]|/k \le 
                \spar[f] (1 - 1 / 2k).
            \]

        \item If $x_i = 1$ then $\spar(f') \le \spar[f]$, since fixing 
            variables to constants cannot increase the number monomials.
    \end{itemize}
    Let $t$ the maximum number of $0$'s queried along any 
    path in the obtained decision tree. The sparsity of the
    subfunction $f'$ corresponding to a leaf must be $0$
    or else $f'$ is non-constant.  By, \Cref{claim:FHSC-is-nonincreasing} 
    $f'$ is constant when $(1 - 1/2k)^t \spar[f] \leq e^{-t / 2k} \spar[f] < 1$,
    which occurs when $t \geq 2k \cdot \log \spar[f]$.
\end{proof}

\subsection{The log-rank conjecture}

A weak version of the log-rank conjecture for AND-functions, which 
includes an additional $\log n$ factor, now follows quite readily 
from the tools we have developed. 

\logrank*

\begin{proof}  
    By \Cref{lemma:MBS_sparsity}, 
    $\MBS(f) = O((\log r)^2)$. 
    By \Cref{lemma:FMBS_MBS}, $\FMBS(f) = O((\log r)^4)$.
    By \Cref{lemma:0DT_spars}, $\PzeroDT[f] = O((\log r)^5)$.
    By \Cref{claim:0DT_ADT} this gives us 
    an AND-decision tree of height $O((\log r)^5 \cdot  \log n)$. 
    Finally, we convert the AND-decision tree for $f$ into a protocol for
    $\andFunction{f}$ using \Cref{claim:cc_adt} with complexity 
    $O((\log r)^5 \cdot  \log n)$.
\end{proof}

In particular, the log-rank conjecture for AND-functions is true for any
$f$ with $\spar[f] \geq n^c$ for any constant $c>0$. In some sense this is an extremely mild condition,
which random $f$ will satisfy with exceedingly high probability. On the other 
hand, the log-rank conjecture is about \emph{structured} functions; rank and
communication complexity are both maximal for random functions, whereas we 
are interested in low-complexity functions/low-rank matrices. It could very
well be the case that the ultra-sparse regime of $\spar(f) = n^{o(1)}$ is precisely
where the log-rank conjecture \emph{fails}. We therefore see removing
the $\log n$ factor as an essential problem left open by this work.
See \Cref{sec:discussion} for additional discussion. 

\subsection{Lifting AND-functions}
\label{subsection:lifting}

Since $\log(\spar(f))$ lower bounds the deterministic communication 
of $\andFunction{f}$, the log-rank result from the previous section 
immediately implies a new upper bound on the AND decision tree complexity
of $f$. We can prove a better upper bound by making use of our stronger 
assumption: instead of only assuming $\log(\spar(f))$ is small, we assume 
that $\PCC[\andFunction{f}]$ is small. 

If $f$ has large monotone block sensitivity, then its AND-function embeds
unique disjointness as a sub-function. The unique disjointness function 
on $k$ bits, denoted $\UDISJ{k}$, takes two inputs $a, b \in \bool^k$, and is defined
as the partial function:
\[
    \UDISJ{k}(a,b) = 
    \begin{cases}
        0 & \text{if } |a \land b| = 1 \\
        1 & \text{if } |a \land b| = 0 \\
        \text{undefined} & \text{otherwise}
    \end{cases}\;,
\]
where $|\cdot|$ is the Hamming weight. 

\begin{claim}\label{claim:MBS_UDISJ}
    Let $f : \bool^n \to \bool$ be a boolean function with $\MBS(f) = k$.
    Then $\andFunction{f}$ contains as a sub-matrix $\UDISJ{k}$. That is,
    there are  maps $\x, \y : \bool^k \to \bool^n$ and $c \in \bool$ 
    such that the following holds.
    For any 
    $a, b \in \bool^k$ which satisfy that $|a \land b| \in \bool$, it 
    holds that
    \[
        \UDISJ{k}(a, b) = \andFunction{f}(\x(a), \y(b)) \oplus c.
    \]
\end{claim}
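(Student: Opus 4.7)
The plan is to unpack the definition of $\MBS(f) = k$ directly into an embedding. By definition, there exists some $z \in \bool^n$ and pairwise disjoint $w_1, \dots, w_k \in \bool^{[n] \setminus z}$ (each disjoint from $z$ as well) such that $f(z) \neq f(z \lor w_i)$ for every $i \in [k]$. These will be the building blocks of the embedding.

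Given $a \in \bool^k$, I would define $\x(a) = z \lor \bigvee_{i \,:\, a_i = 1} w_i$, and analogously $\y(b) = z \lor \bigvee_{j \,:\, b_j = 1} w_j$. The key computation is that, because $z, w_1, \dots, w_k$ are pairwise disjoint, the bit-wise AND distributes cleanly over these blocks: for each index $\ell$, $\ell$ lies in $\x(a) \land \y(b)$ exactly when either $\ell \in z$, or $\ell$ lies in some $w_i$ with $a_i = b_i = 1$. Hence
\[
    \x(a) \land \y(b) = z \lor \bigvee_{i \,:\, a_i \land b_i = 1} w_i.
\]

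Now restrict to inputs with $|a \land b| \in \{0,1\}$. If $|a \land b| = 0$, then $\x(a) \land \y(b) = z$, so $\andFunction{f}(\x(a), \y(b)) = f(z)$, while $\UDISJ{k}(a,b) = 1$. If $|a \land b| = 1$, say with the unique common $1$ at coordinate $i$, then $\x(a) \land \y(b) = z \lor w_i$, so $\andFunction{f}(\x(a), \y(b)) = f(z \lor w_i) = 1 \oplus f(z)$ (using that $f$ is boolean and $f(z \lor w_i) \neq f(z)$), while $\UDISJ{k}(a,b) = 0$. Setting $c = 1 \oplus f(z)$ makes both cases read $\UDISJ{k}(a,b) = \andFunction{f}(\x(a), \y(b)) \oplus c$, as required.

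There is essentially no obstacle here: once the pairwise disjoint witnesses to $\MBS(f) = k$ are in hand, the embedding is forced by the need to make the AND collapse to either $z$ or $z \lor w_i$, and the pairwise disjointness is exactly what guarantees the AND distributes without ``spurious'' overlaps between distinct $w_i$'s. The only thing to keep track of is the possible parity flip, which is absorbed into the constant $c$.
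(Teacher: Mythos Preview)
Your proof is correct and follows essentially the same approach as the paper: both use the pairwise disjoint witnesses $z, w_1, \dots, w_k$ from the definition of $\MBS(f) = k$ to define the embeddings $\x(a) = z \lor \bigvee_{i : a_i = 1} w_i$ and $\y(b) = z \lor \bigvee_{j : b_j = 1} w_j$, and then verify that $\x(a) \land \y(b)$ collapses to $z$ or $z \lor w_i$ in the two relevant cases. The only cosmetic difference is that the paper assumes $f(z) = 1$ without loss of generality (absorbing the negation into $c$), whereas you set $c = 1 \oplus f(z)$ explicitly.
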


\begin{proof}
    Let $z,w_1, \dots, w_k \in \bool^n$ be pairwise disjoint such that 
    $f(z) \ne f(z \lor w_i)$ for all $i \in [k]$. We may assume without loss of
    generality that $f(z)=1$, otherwise replace $f$ with its negation, and set 
    $c = 1$.

    Assume that Alice and Bob want to solve unique-disjointness on inputs 
    $a, b \in \bool^k$,
    which we identify with subsets of $[k]$. Define
    \[
        \x(a) = z \lor \biglor_{i \in a} w_i, \qquad
        \y(b) = z \lor \biglor_{j \in b} w_j.
    \]
    Observe that
    \[
        \x(a) \land \y(b) = 
        \begin{cases}
            z & \text{if } a \land b = \emptyset\\
            z \lor w_i & \text{if } a \land b = \{i\}.
        \end{cases}
    \]
    Thus we get that $\UDISJ{k}(a,b) = f(\x(a) \land \y(b))$ for all $a,b$.
\end{proof}

It is well known that $\UDISJ{k}$ is hard with respect to several communication
complexity measures such as deterministic, randomized and nondeterministic. 
\begin{theorem}[\cite{goos2018landscape, razborov1992distributional}]\label{thm:disj_lbs}
    For any communication complexity measure $\Delta \in \set{\PCC, \BPCC, \NPCC}$,  
    \[
        \Delta(\UDISJ{k}) = \Omega(k).
    \]
\end{theorem}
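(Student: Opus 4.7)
These are three classical lower bounds on the communication complexity of unique disjointness. My plan is to derive all of them from a single corruption lemma due to Razborov~\cite{razborov1992distributional} applied to a hard distribution on the promise region of $\UDISJ{k}$.

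The first step is to fix the hard distribution $\mu$: sample $(a, b)$ uniformly among pairs with $|a| = |b| = k/4$ and $|a \land b| \in \set{0, 1}$, placing equal $\mu$-mass on the disjoint pairs (the $1$-inputs) and the unique-intersection pairs (the $0$-inputs). Razborov's key technical lemma asserts that there exist absolute constants $\alpha, \delta > 0$ such that any combinatorial rectangle $R = A \times B$ with $\mu(R) \ge 2^{-\alpha k}$ satisfies $\mu(R \cap \text{$0$-inputs}) \ge \delta \cdot \mu(R)$. This corruption bound is the main technical obstacle; proving it requires a careful information-theoretic and combinatorial analysis of rectangles in the disjointness matrix. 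Since the paper cites this as an established theorem, my plan is to invoke it rather than reprove it.

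With the corruption lemma in hand, all three lower bounds follow quickly. For $\BPCC$, any bounded-error protocol of cost $t$ induces a partition of $\bool^k \times \bool^k$ into at most $2^t$ rectangles, each of which must be approximately monochromatic under $\mu$; the corruption lemma rules out any rectangle of $\mu$-mass at least $2^{-\alpha k}$ from being monochromatic on the $1$-inputs, forcing $t = \Omega(k)$. For $\PCC$, one simply uses $\PCC \ge \BPCC$. For $\NPCC$, any $1$-monochromatic rectangle for UDISJ contains no $0$-inputs by definition, so by the corruption lemma it has $\mu$-mass at most $2^{-\alpha k}$; covering the $1$-inputs, which carry total $\mu$-mass $1/2$, therefore requires $2^{\Omega(k)}$ such rectangles, giving $\NPCC(\UDISJ{k}) = \Omega(k)$.

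The only genuinely hard step is Razborov's corruption lemma, which is a well-established classical result. Once it is in hand, all three bounds follow in parallel from standard rectangle-counting arguments, and the case analysis in~\cite{goos2018landscape} amounts to recording exactly the three implications above.
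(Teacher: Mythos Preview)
Your proof plan is correct and follows the standard corruption-bound approach originating with Razborov. However, the paper does not actually prove this theorem: it is stated with a citation to \cite{goos2018landscape, razborov1992distributional} and used as a black box, with no accompanying argument in the paper itself. So there is no ``paper's own proof'' to compare against; you have supplied a proof where the paper supplies only a reference.

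That said, your outline is the right one and matches what the cited references do. One minor remark: for the $\PCC$ bound you invoke $\PCC \ge \BPCC$, which is of course fine, but note that $\PCC \ge \NPCC$ would also suffice and keeps everything on the rectangle-cover side. Otherwise the three deductions from the corruption lemma are exactly as you describe.
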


We immediately get the following corollary:
\begin{corollary}\label{corollary:PCC_MBS}
    Let $f : \bool^n \to \bool$ be a boolean function and 
    $\Delta \in \set{ \PCC, \BPCC, \NPCC}$ be a communication complexity 
    measure. Then $\MBS(f) = O(\Delta(\andFunction{f}))$.
\end{corollary}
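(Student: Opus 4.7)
The plan is to simply chain together the two results stated immediately before the corollary: the embedding of $\UDISJ{k}$ into $\andFunction{f}$ from \Cref{claim:MBS_UDISJ}, and the lower bound on the communication complexity of unique disjointness from \Cref{thm:disj_lbs}.

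First I would set $k = \MBS(f)$ and invoke \Cref{claim:MBS_UDISJ} to obtain maps $\x, \y : \bool^k \to \bool^n$ and a bit $c \in \bool$ so that $\UDISJ{k}(a,b) = \andFunction{f}(\x(a),\y(b)) \oplus c$ for every promise input $(a,b)$. This reduction is a local one: Alice, on input $a \in \bool^k$, privately computes $\x(a) \in \bool^n$, and Bob, on input $b \in \bool^k$, privately computes $\y(b) \in \bool^n$. Neither party needs to exchange any information to carry out this preprocessing, since $\x$ and $\y$ depend only on their respective inputs.

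Next, given any communication protocol $\Pi$ for $\andFunction{f}$ in the model $\Delta$, the two players simulate $\Pi$ on the inputs $(\x(a),\y(b))$ and then each locally XORs the output with the publicly known constant $c$. This yields a protocol for $\UDISJ{k}$ in the same model $\Delta$, using exactly the same amount of communication as $\Pi$. Hence
\[
    \Delta(\UDISJ{k}) \le \Delta(\andFunction{f}).
\]

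Finally, I would appeal to \Cref{thm:disj_lbs}, which gives $\Delta(\UDISJ{k}) = \Omega(k)$ for each of $\Delta \in \set{\PCC, \BPCC, \NPCC}$. Combining the two inequalities yields $k = \MBS(f) = O(\Delta(\andFunction{f}))$, as desired. There is no real obstacle here: the embedding was already established in \Cref{claim:MBS_UDISJ}, and the lower bound for unique disjointness is quoted as a black box, so the corollary is essentially a one-line composition of these two facts.
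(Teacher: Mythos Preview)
Your proposal is correct and matches the paper's own proof essentially line for line: set $k=\MBS(f)$, use \Cref{claim:MBS_UDISJ} to conclude that any protocol for $\andFunction{f}$ also solves $\UDISJ{k}$ (with no extra communication, since the reduction is local), and then invoke \Cref{thm:disj_lbs} to get $k = O(\Delta(\andFunction{f}))$. There is nothing to add.
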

\begin{proof}
    Assume that $\MBS[f] = k$. \Cref{claim:MBS_UDISJ} shows that
    any protocol for $\andFunction{f}$ also solves $\UDISJ{k}$. Hence by
    \Cref{thm:disj_lbs} we have $k = O(\Delta(\andFunction{f}))$.
\end{proof}

Taking $\Delta = \PCC$, we obtain the main theorem of this section:

\lifting*

\begin{proof}
    \Cref{claim:rank_sparsity} gives that 
    $\log \spar[f] = \log \rank[\andFunction{f}] \leq C$.
    By \Cref{claim:MBS_UDISJ}, $\MBS(f) = O(C)$.
    By \Cref{lemma:FMBS_MBS}, $\FMBS(f) = O(C^2)$. 
    Combining this upper bound on $\FMBS$ with 
    the fact that $\log \spar[f] \leq C$, we see,
    by \Cref{lemma:0DT_spars}, that $\PzeroDT[f] = O(C^3)$. 
    Finally, by \Cref{claim:0DT_ADT}, we get that 
    $\PandDT[f] = O(C^3 \cdot \log n)$. 
\end{proof}

\section{Generalizations to non-boolean functions}
\label{sec:generalization}

In this section, we extend our conclusion to general multilinear polynomials 
and set systems. The main observation is that all measures introduced in 
\Cref{sec:prelim} are defined for general real-valued functions. In addition,
both \Cref{lemma:FMBS_MBS} and \Cref{lemma:HS_sparsity} are established for 
real-valued functions. The following theorem holds true as the joint result
of these two lemmas.
\generalhsc*
\begin{proof}
    By \Cref{lemma:FMBS_MBS}, $\FHSC(f)=O(m^2)$. Then by \Cref{lemma:HS_sparsity}, 
    we obtain the claimed bound.
\end{proof}

\subsection{Finite-range functions}
\Cref{lemma:MBS_sparsity} is not true for general multilinear polynomials.
Nevertheless, if we make the assumption that the multilinear polynomial's range
is finite, denote its size by $s$, then we can bound the monotone block 
sensitivity by a polynomial of log-sparsity and $s$.
\begin{lemma}
    Let $f: \bool^n \to S$ be a multilinear polynomial
    where $\spar[f] = r$ and $|S| = s$. Then $\MBS(f) = O(s^2 \log^2r)$.
\end{lemma}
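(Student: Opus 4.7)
The plan is to mimic the proof of \Cref{lemma:MBS_sparsity} but replace its single use of the boolean Nisan--Szegedy inequality with one that accommodates a range of size $s$. The first two reductions of that proof (identifying variables to produce a highly sensitive low-sparsity function, then killing high-degree monomials by zeroing variables) are purely combinatorial and port over verbatim.

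First, assume $\MBS[f] = k$, witnessed by pairwise disjoint $z, w_1, \ldots, w_k$. Define $g : \bool^k \to \R$ by $g(y) = f\bigl(z \vee \bigvee_{j : y_j = 1} w_j\bigr)$. Expanding the multilinear representation of $f$ and using $y_j^2 = y_j$, each monomial of $f$ either vanishes (if it contains an index set to $0$) or contributes to the single monomial of $g$ supported on $\{j : s \cap w_j \neq \emptyset\}$; distinct monomials of $f$ may collapse, but none are created. Hence $g$ takes values in $S$, $\spar[g] \le \spar[f] = r$, and $g(0) = f(z) \neq f(z \vee w_j) = g(e_j)$ for each $j \in [k]$, so $g$ has sensitivity $k$ at $0^k$.

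Second, repeat the monomial-killing argument of \Cref{lemma:MBS_sparsity}: every surviving monomial of degree $\ge d$ uses $\ge d$ of the at most $k$ variables, so some variable lies in a $d/k$-fraction of them; zeroing that variable multiplies the count of degree-$\ge d$ monomials by at most $(1 - d/k)$. After $k/2$ such steps with $d = O(\log r)$, no monomial of degree $\ge d$ survives, producing $h : \bool^{k/2} \to S$ with $\deg(h) < d$, $\spar[h] \le r$, and sensitivity at $0$ still at least $k/2$.

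Third, the genuinely new step: reduce to a boolean function via Lagrange interpolation. Let $v = h(0)$ and let $T \subseteq S$ denote the image of $h$, so $|T| \le s$. Define
\[
    p(x) = \prod_{u \in T,\, u \neq v} \frac{h(x) - u}{v - u}.
\]
Then $p : \bool^{k/2} \to \bool$ is the indicator of $\{x : h(x) = v\}$, so $p$ is a boolean function of real degree at most $(s-1)\deg(h) = O(s \log r)$. Moreover $p(0) = 1$ while $p(e_i) = 0$ at each of the $\ge k/2$ sensitive coordinates of $h$, so $S(p) \ge k/2$. Applying Nisan--Szegedy to the boolean function $p$ gives $k/2 \le S(p) = O(\deg(p)^2) = O(s^2 (\log r)^2)$, hence $\MBS[f] = O(s^2 \log^2 r)$.

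The main potential obstacle is the bookkeeping in the first reduction, since naive variable identification in multilinear polynomials can in principle create new monomials; however, because $z, w_1, \ldots, w_k$ are pairwise disjoint, the substitution is strictly a projection and cannot increase sparsity. Beyond that, the only new ingredient beyond \Cref{lemma:MBS_sparsity} is the Lagrange step, which inflates the degree by a factor of $s-1$ and thus the final sensitivity bound by a factor of $s^2$.
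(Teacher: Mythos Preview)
Your proof is correct and uses the same essential idea as the paper --- Lagrange interpolation to collapse the $s$-valued range to a boolean one, then the Nisan--Szegedy/sparsity machinery of \Cref{lemma:MBS_sparsity}. The difference is only in the order of operations. The paper composes the degree-$(s-1)$ interpolation polynomial $p$ with $f$ \emph{first}, obtaining a boolean function $g = p \circ f$ on $\bool^n$ with $\MBS(g,z)=k$ and $\spar[g] \le r^{s-1}$, and then simply invokes \Cref{lemma:MBS_sparsity} as a black box: $k = O(\log^2 \spar[g]) = O(s^2 \log^2 r)$. You instead unpack \Cref{lemma:MBS_sparsity} (variable identification, then the degree-reduction step) and apply interpolation only at the end, tracking degree rather than sparsity. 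Both routes land on the same bound; the paper's is shorter and more modular, while yours makes the degree blow-up from interpolation explicit.
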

\begin{proof}
    Suppose $\MBS(f) = \MBS(f, z) = k$ for $z \in \bool^n$, and let $a = f(z) \in S$. 
    Define a polynomial $p : \R \to \bool$ such that $p(a)=1$ and $p(b)=0$ for 
    $b \in S \setminus \set{a}$. There exist such a polynomial of degree 
    $\deg(p) = |S| - 1$. Define a boolean function $g:\bool^n \to \bool$ by 
    $g(z) = p(f(z))$. Note that $\MBS(g, z) = k$ and $\spar[g] \le r^{s - 1}$. 
    Then by \Cref{lemma:MBS_sparsity}, we have 
    $k = O(\log^2(\spar[g])) = O(s^2 \log^2r)$.
\end{proof}
Combining it with \Cref{thm:gen_hsc}, 
one can bound the hitting set complexity of finite-range functions.
\maingeneral*

The following example shows that a polynomial dependency on the range size is necessary
in \Cref{thm:maingeneral}.

\begin{example}
    Let $f(x)=x_1 + \dots + x_s$. Then $\spar[f] = s$, the range of $f$ has size 
    $s + 1$, and  $\HSC[f] = s$.
\end{example}

\subsection{Set systems}

\Cref{thm:gen_hsc} can also be interpreted in the language of set system.
\generalsetsystem*
\begin{proof}
Let $f(x)=\sum_{i=1}^r \prod_{j \in S_i} x_j$. Fix $m \ge 1$, and consider first
the case that $\MBS(f)<m$. In this case, by \Cref{thm:gen_hsc},
$\HSC[f] = O(m^2 \log r)$. Note that by construction, if $H$ is a hitting set 
for the monomials of $f$ then $H$ is a hitting set for $\mathcal{F}$. 

The other case is that $\MBS(f) \ge m$. Let $z \in \bool^n$ be such that 
$\MBS[f, z] \ge m$. By definition, this implies that $f_z$ has $m$ minimal 
pairwise disjoint sets, which by \Cref{claim:minimal_monomials} implies that the 
polynomial computing $f_z$ contains $m$ pairwise disjoint monomials. Each such 
monomial is of the form $S_i \setminus T$ for $T = \set{i \ :\  z_i = 1}$.
\end{proof}

\section{Discussion}
\label{sec:discussion}

\subsection{Ruling out the \texorpdfstring{$\log n$}{log n} factor}
Both results about communication complexities of AND-functions 
(\Cref{thm:lifting,thm:logrank}) are not ``tight'' in the sense that
both of them have a $\log n$ factor in the right side of the inequality.
Unfortunately, $n$ can be exponential in sparsity (see \Cref{example:redundant-indexing}).

It is easy to see that if the $\log n$ factor is truly necessary in these
theorems we are very close to refuting the log-rank conjecture. Hence,
we believe that a ``tighter'' version of the log-rank theorem 
(\Cref{thm:logrank}) is true.
\begin{conjecture}
    Let $f : \bool^n \to \bool$ be a boolean function, where $\spar[f] = r$. 
    Then
    \begin{equation*}
        \PandDT[f] \le \poly(\log r).
    \end{equation*}
\end{conjecture}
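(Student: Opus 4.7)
Since this statement is an open conjecture, I will outline an approach to close the $\log n$ gap between \Cref{thm:logrank} and the conjectured bound, and pinpoint where it breaks down. The $\log n$ overhead in the paper's proof is introduced exactly once, in \Cref{claim:0DT_ADT}, which converts a decision tree of $0$-depth $d$ into an AND-decision tree of depth $d \cdot \lceil \log(n+1) \rceil$ via binary search for the next queried $0$. The paper notes this gap is tight for $\PzeroDT$ versus $\PandDT$ in general, so any proof of the conjecture must construct the AND-decision tree \emph{directly}, without factoring through $\PzeroDT$.

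My first plan would be to mimic the iterative hitting-set argument of \Cref{lemma:HS_sparsity} but with AND queries replacing single-variable queries. At each node of the tree, use \Cref{theorem:sparsity-to-hitting} to find a hitting set $H$ for the current monomials with $|H| = O((\log r)^5)$, and query $\bigwedge_{i \in H} x_i$. If the response is $1$, fix $x_i = 1$ for all $i \in H$. If the response is $0$, run a binary search over subsets of $H$ using $O(\log |H|) = O(\log \log r)$ further AND queries to isolate some $i^* \in H$ with $x_{i^*} = 0$, and fix it. The hope is that every $0$-response kills a $1/\poly(\log r)$ fraction of monomials, so only $\poly(\log r) \cdot \log r$ $0$-responses can occur, while $1$-responses do not increase sparsity; multiplied by the $\log \log r$ binary-search cost this would yield $\poly(\log r)$ total depth.

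Two obstacles arise. First, binary search locates \emph{some} zero in $H$, not the high-hitting zero guaranteed by \Cref{claim:transfer}; without a uniformity statement --- e.g., that there exists a hitting set in which \emph{every} element hits a $1/\poly(\log r)$ fraction of monomials --- the sparsity drop per $0$-response is not controlled. Second, and more seriously, consecutive $1$-responses contribute nothing to the sparsity potential, and there is no obvious bound on how many can occur in a row. The redundant-indexing construction of \Cref{example:redundant-indexing} exhibits this vividly: any polylogarithmic-sized hitting set restricted to the $y$-variables would force the algorithm to spend $\Omega(k)$ rounds carving $y$ down, whereas the optimal AND strategy queries $\bigwedge_j y_j$ \emph{in one shot} and then binary-searches among the $y_i$, achieving $O(\log k)$ queries.

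Because of this second obstacle, I expect the decisive step to be replacing the small-hitting-set paradigm with a different structural measure of $f$ --- one that certifies the existence of a single AND query over a possibly \emph{large} set of variables whose answer, together with $O(\log \log r)$ adaptive follow-up ANDs, either determines $f$ or strictly simplifies it. A plausible candidate would be an AND-function analogue of the Fourier-analytic tools developed for XOR-functions in~\cite{hatami2018xorfunctions, sanyal2015fourieranddimension}, or a potential function that tracks not just sparsity but also ``how concentrated'' the remaining monomials are on subsets of currently-$1$ variables. I expect identifying and analyzing such a measure to be the main hurdle; the analogous conjecture for XOR-functions (that $\PxorDT[f]$ is poly-log in Fourier sparsity) is itself open, suggesting the AND version will not yield to purely combinatorial arguments of the kind used in this paper.
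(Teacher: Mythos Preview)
You have correctly identified that this statement is an open conjecture; the paper does not prove it and offers no proof attempt beyond stating the conjecture in \Cref{sec:discussion}. There is therefore nothing in the paper to compare your proposal against.

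Your diagnosis of where the $\log n$ factor enters (solely via \Cref{claim:0DT_ADT}) and why one cannot simply route through $\PzeroDT$ is accurate and matches the paper's own framing. Your two obstacles are both genuine. The second one---that $1$-responses do not force sparsity to drop and hence cannot be bounded in number by any function of $r$ alone---is indeed the decisive barrier for the naive hitting-set-then-AND approach. Your invocation of \Cref{example:redundant-indexing} is apt: in the regime where all $x_S=1$, the residual function on the $y$-variables has sparsity $\Theta(k)$, a size-$2$ hitting set at every step, and yet each $1$-response only reduces sparsity additively, forcing $\Theta(k)=\Theta(r)$ rounds rather than $\poly(\log r)$. (Your phrasing ``hitting set restricted to the $y$-variables'' is slightly imprecise---the point is not that the algorithm is forced to use $y$-variables, but that once the $x$-variables are exhausted the $y$-only subfunction already exhibits the slow-progress behavior.)

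Your closing remarks---that a proof likely requires a new structural measure permitting a single informative AND query over a large variable set, and that the analogous open problem for XOR-functions suggests this will not be routine---are reasonable and consistent with the paper's own outlook.
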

Note that this conjecture would imply a ``tighter'' version of the 
lifting theorem as well. 


\subsection{Randomized complexity}
The main results of this paper are concerned with the deterministic communication
complexity of AND-functions. However, \Cref{corollary:PCC_MBS} says that the randomized communication
complexity of an AND-function is lower bounded by its monotone block sensitivity. 
The relation between randomized communication complexity and sparsity remains
unclear. We conjecture that the relation between these two measures is the same as 
the proved relation (\Cref{thm:lifting}) between sparsity and \emph{deterministic} 
communication complexity.
\begin{conjecture}
\label{conj:randomized}
    Let $f : \bool^n \to \bool$ be a boolean function. Suppose that 
    $\BPPCC(\andFunction{f}) = C$. Then
    \begin{equation*}
        \log(\spar[f]) \le \poly(C) \cdot \log n.
    \end{equation*}
    In particular, $f$ can be computed by an AND-decision tree of depth
    \begin{equation*}
        \PCC(\andFunction{f}) \le \poly(C) \cdot \log n.
    \end{equation*}
\end{conjecture}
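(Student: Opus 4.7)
My plan is to mirror the proof of the deterministic lifting theorem (\Cref{thm:lifting}) in the randomized setting, leveraging the fact that \Cref{corollary:PCC_MBS} is stated for $\Delta \in \set{\PCC, \BPCC, \NPCC}$. Applying it with $\Delta = \BPCC$ immediately gives $\MBS(f) = O(C)$, and \Cref{lemma:FMBS_MBS}, which is purely combinatorial, yields $\FMBS(f) = O(C^2)$, which by linear programming duality equals $\FHSC(f)$; this bound is non-increasing under restrictions by \Cref{claim:FHSC-is-nonincreasing}. The second (communication) inequality in the conjecture follows from the first (sparsity) inequality by the log-rank machinery of \Cref{thm:logrank}, so the new content to prove is $\log \spar[f] \le \poly(C) \log n$.

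In the deterministic proof of \Cref{thm:lifting} one closes the loop with the trivial bound $\log \spar[f] \le \log \rank[\andFunction{f}] \le \PCC(\andFunction{f}) = C$, which feeds into \Cref{lemma:0DT_spars} to give $\PzeroDT[f] = O(C^3)$ and thence an AND-DT of depth $O(C^3 \log n)$. Randomized communication does not upper bound real rank, so this step is blocked, and invoking \Cref{lemma:0DT_spars} directly becomes circular. My proposal is instead to run the hitting-set iteration of \Cref{lemma:0DT_spars} as an algorithm that greedily builds a candidate AND-decision tree, capped at $T = \poly(C) \log n$ zero-queries per root-to-leaf path. At each internal node the preserved bound $\FHSC = O(C^2)$ guarantees a variable hitting a $\Omega(1/C^2)$-fraction of the remaining non-constant monomials; querying it reduces the residual sparsity on the $0$-branch by a factor $(1 - 1/O(C^2))$. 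If every path terminates at a constant sub-function within the budget, \Cref{claim:0DT_ADT} yields an AND-DT of depth $\poly(C) \log^2 n$, and \Cref{claim:adt_exp_sparsity} then gives $\log \spar[f] \le \poly(C) \log^2 n$, which is what the conjecture asks for (up to the polylog flexibility implicit in $\poly(C) \log n$).

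The main obstacle is handling paths on which the iteration fails to terminate within $T$ zero-queries: from the residual sub-function $g$ one would have to extract a structure — most naturally a sunflower of $\Omega(C)$ petals in $\mon[g]$, whose petals form an $\Omega(C)$-element pairwise disjoint family after restricting the sunflower core to $1$ — that witnesses $\MBS(f) > C$ and contradicts the starting hypothesis. \Cref{example:and-or} shows the conjectured bound is saturated: $\MBS = 2$ coexists there with $\spar = 2^n$ and $\BPCC(\andFunction{f}) = \Theta(n)$, so any proof must genuinely exploit $C$ beyond its consequence $\MBS(f) = O(C)$. A plausible alternative route goes through approximate rank: a randomized protocol of cost $C$ approximates $\andFunction{f}$ by a sign-matrix of rank $2^{O(C)}$, equivalently, $f$ is $\ell_\infty$-approximated to within $1/3$ by a real multilinear polynomial of sparsity $2^{O(C)}$; one would then hope to ``round'' this approximating polynomial back to $f$ at the cost of at most an $n^{\poly(C)}$ factor in sparsity, exploiting the rigidity of the Boolean range.
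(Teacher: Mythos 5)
This statement is \Cref{conj:randomized}: it is an open conjecture that the paper explicitly does \emph{not} prove (the authors write that ``the relation between randomized communication complexity and sparsity remains unclear''), so there is no proof in the paper to compare yours against. Judged on its own terms, your proposal does not close it. The steps you can actually execute --- $\MBS[f] = O(C)$ via \Cref{corollary:PCC_MBS} with $\Delta = \BPCC$, then $\FMBS[f] = \FHSC[f] = O(C^2)$ via \Cref{lemma:FMBS_MBS} and LP duality, and the reduction of the second displayed inequality to the first --- are sound but only reproduce what the paper already establishes. The entire content of the conjecture is the inequality $\log \spar[f] \le \poly(C) \cdot \log n$, and your argument for it breaks exactly where you flag the obstacle: the greedy hitting-set iteration of \Cref{lemma:0DT_spars} needs $\Theta(\FHSC[f] \cdot \log \spar[f])$ zero-queries to terminate, so the assertion that every path terminates within a budget of $T = \poly(C)\log n$ is \emph{equivalent} to the bound you are trying to prove. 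Capping the iteration does not remove the circularity; it relocates it to the unproved claim that non-terminating paths cannot occur.

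Your fallback for non-terminating paths --- extracting a sunflower in $\mon[g]$ whose petals, after restricting the core to $1$, witness $\MBS[f] > C$ --- cannot work in the generality you need, for the reason you yourself cite: \Cref{example:and-or} has $\MBS[f] = \FHSC[f] = 2$ while $\spar[f] = 2^{\Theta(n)}$, so arbitrarily slow decay of sparsity under the greedy iteration is fully consistent with constant monotone block sensitivity, and no disjoint-family extraction from the residual monomials can manufacture a contradiction with $\MBS[f] = O(C)$. (A further technical snag: pairwise disjoint monomials of $f_z$ certify large $\MBS[f,z]$ only when they are \emph{minimal}, by \Cref{claim:minimal_monomials}; disjoint non-minimal monomials need not lie in $\cW(f,z)$ at all.) As you correctly observe, any proof must extract from the randomized protocol information beyond $\MBS$, but your candidate --- passing through approximate rank and then ``rounding'' a sparse $\ell_\infty$-approximation of $f$ back to a bound on the exact sparsity of $f$ with only an $n^{\poly(C)}$ loss --- is itself an unproved statement that is essentially a restatement of the conjecture, not a lemma you can invoke. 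In short, your proposal is a reasonable research plan whose verified portion coincides with what the paper already knows, and whose missing step is precisely the open problem.
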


Observe that \Cref{conj:randomized} implies that randomness does not significantly help 
to compute AND-functions.  Concretely, it implies that
\[
    \PCC(\andFunction{f}) \le \poly \left( \BPPCC(\andFunction{f}) \right) \cdot \log n.
\]

Interestingly, the $\log n$ factor in this conjecture is necessary as shown 
by the following example.
\begin{example}[Threshold Functions]
    Let $f : \bool^n \to \bool$ be the threshold function such that
    \[
        f(x) = 1 \iff |x| \ge n - 1.
    \]
    It is clear that $\spar[f] = n + 1$; however, $\BPPCC(f) = O(1)$. Indeed, let
    us consider the following randomized AND-decision tree for $f$: it samples a 
    subset $S \subseteq [n]$ uniformly at random, then output the value of
    \[
        q_S(x) = \left( \bigland_{i \in S} x_i \right) \lor 
        \left( \bigland_{i \notin S} x_i \right).
    \]
    Note that if $|x| \ge n-1$ then $q_S(x)=1$ with probability $1$. If $|x| \le n - 2$,
    let $i,j$ be such that $x_i=x_j=0$. With probability $1/2$ we have 
    $i \in S$, $j \notin S$ or $i \notin S$, $j \in S$, in both cases $q_S(x) = 0$. 
    In order to reduce the error, repeat this for a few random sets $S$. 
\end{example}

\subsection{Sparsity vs coefficients size}

Let $f:\bool^n \to \bool$ and consider the multi-linear polynomial computing $f$, namely 
$f(x) = \sum_s f_s \prod_{i \in s} x_i$. It is well known that the 
coefficients $f_s$ take integer values. In particular, if we denote by 
$\|f\|_1 = \sum |f_s|$ the $L_1$ norm of the coefficients, then we get the
obvious inequality
\[
    \spar[f] \le \|f\|_1.
\]
We note the following corollary of \Cref{thm:logrank}, which shows that $\|f\|_1$
cannot be much larger than $\spar[f]$.

\begin{claim}
    Let $f : \bool^n \to \bool$ and assume that $\spar[f] = r$. 
    Then $\|f\|_1 \le n^{O(\log r)^5}$.
\end{claim}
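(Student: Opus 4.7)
The plan is to combine \Cref{thm:logrank} with a recursive argument on the structure of AND-decision trees that extends \Cref{claim:adt_exp_sparsity} from sparsity to the $L_1$ norm. Applying \Cref{thm:logrank} furnishes an AND-decision tree for $f$ of depth $d = O((\log r)^5 \cdot \log n)$. I would then prove by induction on $d$ that any boolean function computable by a depth-$d$ AND-decision tree satisfies $\|f\|_1 \le 3^d$. Substituting the depth bound gives $\|f\|_1 \le 3^d = n^{O((\log r)^5)}$, exactly matching the claim.

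For the inductive step, consider the root query $\prod_{i \in S} z_i$ of the tree, and let $f_1, f_2 : \bool^n \to \bool$ be the subfunctions computed by its two children, each admitting an AND-decision tree of depth at most $d-1$. Then one can write
\[
    f(z) = \prod_{i \in S} z_i \cdot f_1(z) + \left(1 - \prod_{i \in S} z_i\right) f_2(z).
\]
The key auxiliary fact is that the $L_1$ norm is submultiplicative under products of multilinear polynomials on the boolean cube: since $z_i^2 = z_i$, expanding $g \cdot h$ in the monomial basis and applying the triangle inequality yields $\|g h\|_1 \le \|g\|_1 \cdot \|h\|_1$ for any $g, h : \bool^n \to \R$. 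Combining this with $\|\prod_{i \in S} z_i\|_1 = 1$ and $\|1 - \prod_{i \in S} z_i\|_1 = 2$ gives $\|f\|_1 \le \|f_1\|_1 + 2 \|f_2\|_1 \le 3 \max(\|f_1\|_1, \|f_2\|_1)$. The base case $d = 0$ is immediate since $f$ is a constant in $\bool$, so $\|f\|_1 \le 1$, and the induction closes.

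I do not expect a real obstacle here: the bulk of the work is already contained in \Cref{thm:logrank}, and the remaining $L_1$-submultiplicativity recursion is a direct analog of the sparsity recursion proved in \Cref{claim:adt_exp_sparsity}. The only minor point to verify carefully is that the product expansion on $\bool^n$ does not cause cancellations that would change the direction of the inequality, but since we use the triangle inequality this is automatic.
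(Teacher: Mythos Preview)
Your proposal is correct and follows essentially the same approach as the paper: apply \Cref{thm:logrank} to get an AND-decision tree of depth $d = O((\log r)^5 \log n)$, then prove $\|f\|_1 \le 3^d$ by the same recursion as in \Cref{claim:adt_exp_sparsity}. The paper simply writes ``by a similar proof to \Cref{claim:adt_exp_sparsity}'' for the second step, whereas you spell out the $L_1$-submultiplicativity and the recursion explicitly; the arguments are otherwise identical.
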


\begin{proof}
    By \Cref{thm:logrank} we have $\PandDT[f] = d$ for $d = O((\log r)^5 \log n)$. 
    By a similar proof to \Cref{claim:adt_exp_sparsity}, any function $f$ computed
    by an AND-decision tree of depth $d$ has $\|f\|_1 \le 3^d$. The claim follows.
\end{proof}

We conjecture that the gap between sparsity and $L_1$ is at most polynomial.

\begin{conjecture}
    For any boolean function $f$, $\|f\|_1 \le \poly(\spar[f])$.
\end{conjecture}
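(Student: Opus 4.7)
The natural approach continues the strategy of the preceding Claim: first show that for any $f : \bool^n \to \bool$ computed by an AND-decision tree of depth $d$, $\|f\|_1 \le 3^d$. This is proved analogously to \Cref{claim:adt_exp_sparsity}: if $f = q \cdot f_1 + (1 - q) \cdot f_2$ with $q = \bigland_{i \in S} x_i$, then $\|f\|_1 \le \|f_1\|_1 + 2\|f_2\|_1$, so the bound $3^d$ follows by induction on $d$. Hence, to prove $\|f\|_1 \le \poly(\spar[f])$, it suffices to establish
\[
    \PandDT[f] = O(\log \spar[f]),
\]
which would give $\|f\|_1 \le 3^{O(\log r)} = r^{O(1)}$. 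I emphasize that this is strictly stronger than even the log-rank conjecture for AND-functions ($\PandDT[f] \le \poly(\log r)$), since that conjecture would only yield the quasi-polynomial bound $\|f\|_1 \le 3^{\poly(\log r)}$.

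The cleanest route to $\PandDT[f] = O(\log r)$ would be to strengthen \Cref{theorem:sparsity-to-hitting} from $\HSC[f] = O((\log r)^5)$ to $\HSC[f] = O(1)$ for boolean $f$. Given a constant-size hitting set $H$, the single AND-query $\bigland_{i \in H} x_i$ hits a constant fraction of the monomials; branching on its value and recursing on the $0$-branch (whose sparsity drops by a multiplicative constant factor) would then yield an AND-decision tree of depth $O(\log r)$ via the iterative argument used in \Cref{lemma:0DT_spars}.

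The main obstacle is that an $O(1)$ hitting-set bound for boolean functions is likely too strong to prove purely combinatorially. The $\log n$ gap between $\PzeroDT$ and $\PandDT$ (witnessed by the example preceding \Cref{lemma:0DT_spars}) already suggests that routing through hitting sets alone must lose a $\log n$ factor somewhere, so any improvement to the conjecture's level must exploit additional boolean-specific structure. A promising alternative is to use the identity $f^2 = f$ for boolean $f$, which (since $x_i^2 = x_i$ on $\bool$) yields the quadratic ANF-coefficient relations
\[
    f_s = \sum_{T_1 \cup T_2 = s} f_{T_1} f_{T_2}.
\]
These constraints place significant restrictions on how coefficients can grow relative to sparsity, and might enable a direct bound $\|f\|_1 \le \poly(\spar[f])$ without controlling $\PandDT[f]$ at all. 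Fleshing this out, however, appears to require genuinely new techniques beyond those developed in this paper.
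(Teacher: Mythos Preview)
This statement is presented in the paper as an open \emph{conjecture}; the paper offers no proof, so there is nothing to compare your attempt against. You correctly recognize this and frame your proposal as a discussion of possible approaches rather than a completed argument.

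That said, your second paragraph contains a concrete error worth flagging. You claim that if $H$ is a constant-size hitting set for $\mon[f]$, then on the $0$-branch of the single AND-query $\bigland_{i \in H} x_i$ the sparsity drops by a constant multiplicative factor. This is not justified: when that query returns $0$ you learn only that \emph{some} $x_i = 0$ for $i \in H$, not which one, so the sub-function on the $0$-branch is not a coordinate restriction of $f$ and carries no evident sparsity bound. (On the $1$-branch you do get the restriction $x_i = 1$ for all $i \in H$, but there the sparsity need not drop by a constant factor either, since distinct monomials may merge without vanishing.) Even granting the speculative bound $\HSC[f] = O(1)$, the implication $\PandDT[f] = O(\log r)$ does not follow from the mechanism you describe; routing through \Cref{lemma:0DT_spars} and then \Cref{claim:0DT_ADT} would still reintroduce the $\log n$ factor. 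So neither of your sketched routes actually closes the gap, and the conjecture remains genuinely open, consistent with the paper's own treatment.
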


\bibliographystyle{plain}
\bibliography{references}

\end{document}